\newtheorem{theorem}{Theorem}[]
\newtheorem{lemma}[]{Lemma}
\newtheorem{claim}[]{Claim}
\newtheorem{definition}{Definition}[]
\newtheorem{proposition}{Proposition}
\newcounter{relctr} 
\everydisplay\expandafter{\the\everydisplay\setcounter{relctr}{0}} 
\newcommand\labelrel[2]{%
  \begingroup
    \refstepcounter{relctr}%
    \stackrel{\textnormal{(\alph{relctr})}}{\mathstrut{#1}}%
    \originallabel{#2}%
  \endgroup
}
\def\BibTeX{{\rm B\kern-.05em{\sc i\kern-.025em b}\kern-.08em
    T\kern-.1667em\lower.7ex\hbox{E}\kern-.125emX}}
\begin{document}

\title{ Stochastic Bounded Confidence Opinion Dynamics: How Far Apart Do Opinions Drift?\\
}

\author{%
  \IEEEauthorblockN{Sushmitha Shree S, Kishore G V, Avhishek Chatterjee, Krishna Jagannathan}
  \IEEEauthorblockA{Department of Electrical Engineering\\         Indian Institute of Technology Madras\\ Chennai 600036, India\\
                    \{ee18d702, ee16b070\}@smail.iitm.ac.in, 
                    \{avhishek, krishnaj\}@ee.iitm.ac.in 
                    }
                    }

\maketitle
\begin{abstract}
In this era of fast and large-scale opinion formation, a mathematical understanding of opinion evolution, a.k.a. opinion dynamics, is especially important. Linear graph-based dynamics and bounded confidence dynamics are the two most popular models for opinion dynamics in social networks. Recently, stochastic bounded confidence opinion dynamics were proposed as a general framework that incorporates both these dynamics as special cases and also captures the inherent stochasticity and noise (errors) in real-life social exchanges. Although these dynamics are quite general and realistic, their analysis is particularly challenging compared to other opinion dynamics models. This is because these dynamics are nonlinear and stochastic, and belong to the class of Markov processes that have asymptotically zero drift and unbounded jumps. The asymptotic behavior of these dynamics was characterized in prior works. However, they do not shed light on their finite-time behavior, which is often of interest in practice. We take a stride in this direction by analyzing the finite time behavior of a two-agent system, which is fundamental to the understanding of multi-agent dynamics. In particular, we show that the opinion difference between the two agents is well concentrated around zero under the conditions that lead to asymptotic stability of the dynamics.

\end{abstract}

\begin{IEEEkeywords}
Opinion dynamics; Markov process; Concentration inequality.

\end{IEEEkeywords}

\section{Introduction}
\label{sec:intro}
Public opinion is the driving force of a society. The advent of social media platforms has revolutionized the speed and scale of opinion formation, resulting in significant effects on societies. Hence, modeling opinion formation, popularly known as opinion dynamics, and analyzing its behavior is a very important  problem.

The study of opinion dynamics has a long history and is well beyond the scope of this paper. In the mathematical and computational study of opinion dynamics, individuals or social entities, a.k.a. agents, are modeled to have real-valued opinions regarding a topic. A positive (negative) opinion represents a favorable (unfavorable) view of the topic and its magnitude represents the agent's conviction. Opinion dynamics models are discrete-time dynamical systems where opinions of the agents at the next time slot are updated according to a specified function of the current opinions.

Broadly, there have been two popular models of opinion dynamics: linear graph-based dynamics and bounded confidence dynamics. In the first model \cite{French1956, ABELSON1967, DeGroot1974, Friedkin1999, Fagnani2007, Acemoglu2013, Salehi2010, Yildiz2013}, opinion updates occur according to a linear combination of opinions of neighbors on a social graph. In the original bounded confidence dynamics \cite{Axelrod1997,Deffuant2000,Hegselmann2002}, an agent updates its opinion using the average of the opinion of all agents (including itself) whose opinions are within a specified distance from its own opinion. Thus, in short, the linear dynamics consider social graph-based opinion exchanges whereas the bounded confidence dynamics consider opinion-dependent opinion exchanges. 

There is a substantial amount of work extending both models to various scenarios and analyzing them either mathematically or numerically\cite{Noorazar2020}. Despite that, these models failed to capture two basic aspects of human interactions. First, when humans accept other opinions,  it is neither based on only acquaintance (as is the case with linear dynamics), nor it is based on a deterministic threshold (as in bounded confidence dynamics). Often, the consideration has inherent stochasticity, where the probability of accepting another agent's opinion decreases with increasing opinion differences. Second, the opinions of others are never known perfectly and can at best be estimated. In other words, there is always noise in opinion exchanges.

Stochastic bounded confidence (SBC) opinion dynamics proposed in \cite{Baccelli2015, Baccelli-Infocom, Baccelli2017} is a framework that addresses these two issues while merging the graph-based exchanges in linear dynamics with a stochastic generalization of the opinion dependent exchanges in bounded confidence dynamics. Unlike prior opinion dynamics models, where opinions eventually converge, the SBC dynamics capture real-life scenarios where opinions in a society may stay close in a probabilistic sense or may eventually diverge to opposite extremes. 

Linear dynamics and bounded confidence dynamics have been analyzed in detail in the literature. SBC dynamics is a stochastic generalization of both these dynamics on a graph and are characterized by high nonlinearity. Hence, their analysis is quite challenging. In \cite{Baccelli2015, Baccelli-Infocom, Baccelli2017}, specific conditions involving the social graph and the nature of the stochastic opinion-dependent exchanges were provided for limiting opinion differences to be finite. In multiple settings, tight converse results were also provided. Although these results display significant initial progress, they do not disclose anything about the evolution of opinions over a finite time window, which is often of interest in practice. In general, the nonlinear and stochastic nature of SBC dynamics makes the problem realistic and more challenging. 

In this paper, we take the first stride towards characterizing the opinion differences under SBC dynamics over a finite time window by studying a two-agent system. A two-agent system simplifies the SBC dynamics and allows to focus on the issues of nonlinearity and stochasticity by separating them from the complexity imposed by graph structures. Hence, understanding two-agent dynamics is necessary for studying the general SBC dynamics as it can offer useful insights. A two-agent system can also be viewed as  opinion dynamics between opposing groups of people in a society. Thus, in addition to being the first step towards the  study of general SBC dynamics, it has independent relevance. 

The central focus of this work is on characterizing the evolution of the opinion difference for two-agent SBC dynamics over a finite time window. We derive concentration bounds for the opinion difference under sub-Gaussian noise, using a Chernoff bound. In particular, we demonstrate that the opinion difference of the agents is well concentrated around zero, under the same technical conditions that imply the asymptotic stability of the SBC dynamics. 

This paper is organized as follows. In the next section (Sec.~\ref{sec:SBC}), we briefly discuss SBC dynamics. Our main result on the high probability bound on the opinion difference at a finite time when the noise (errors) in opinion exchange has a sub-Gaussian distribution is presented in Sec.~\ref{sec:mainResult}.  In Sec.~\ref{sec:proofOutline}, we provide an outline of the proof of the main result, starting with the relatively simpler case of bounded noise in Sec.~\ref{sec:boundedN}, followed by its extension to sub-Gaussian noise in Sec.~\ref{sec:subG}. Detailed proofs are in Appendix.




\section{Stochastic bounded confidence opinion dynamics}
\label{sec:SBC}
Stochastic bounded confidence (SBC) opinion dynamics \cite{Baccelli2017} captures the effect of the social graph as well as that of the closeness of opinions on opinion evolution. Furthermore, it models the impact of inherent stochasticity in human interactions and the unavoidable noise and errors in opinion exchanges. It is a general framework for opinion dynamics that captures the well-known linear dynamics and the bounded confidence dynamics as special cases. Below, we briefly describe a simplified version of the dynamics that is sufficient for the purpose of this work. Please see \cite{Baccelli2017} for a more general description. 

In SBC dynamics, there is an underlying undirected social network $\mathcal{G}=(\mathcal{V}, \mathcal{E})$ of $n$ agents, which captures the possible interactions. Agents $u$ and $v$ can interact only if they share an edge in the social network $\mathcal{G}$, i.e., $(u,v) \in \mathcal{E}$. For each undirected edge $(u,v) \in \mathcal{E}$, there is an influence function $G_{v,u}:[0,\infty) \to [0,1]$, which captures the probability of mutual influence as a function of the opinion difference.

Opinions are real valued and evolve as a discrete time stochastic dynamics. Opinions at time $t$ are denoted by $\{X_u(t): u \in \mathcal{V}\}$. Any two agents $u$ and $v$ with $(u,v) \in \mathcal{E}$ interact at time $t$ with a non-zero probability and at any time, an agent interacts with at most one other agent. If $u$ and $v$ interact at time $t$, then agents are influenced by each other with probability $G_{u,v}(|X_u(t)-X_v(t)|)$. 

If $u$ and $v$ are influenced by each other at time $t$, then they update their opinions as
\begin{align*} 
X_u(t+1) &= \frac{X_u(t)+X_v(t)}{2} + n_u(t), \nonumber \\
X_v(t+1) &= \frac{X_u(t)+X_v(t)}{2} + n_v(t), \nonumber
\end{align*}
and if $u$ or $v$ is not influenced by the other agent, then its opinion evolves as
\begin{align*} 
X_u(t+1) &= X_u(t) + n_u(t). \nonumber
\end{align*}
Here, for each agent $u$, $n_u(t)$ is an i.i.d. zero mean process. This captures the errors and noise in the interactions, which stem from miscommunications and misinterpretations. This also captures the innate  evolution of the opinion of an agent due to his/her own thoughts and emotions. 

Note that if one chooses $G_{u,v}(x)=1$ for all $x$ and $(u,v) \in \mathcal{E}$ and $n_u(t)=0$, we get back the well-known linear dynamics on a social network. On the other hand, if we choose $\mathcal{G}$ to be a clique and for any $u,v$, choose $G_{u,v}(x)=1$ for $x\le d$  and $G_{u,v}(x)=0$ for all $x>d$, we get back the well-known pairwise bounded confidence opinion dynamics. Thus, these two popular class of dynamics are special cases of the SBC dynamics.

As the SBC dynamics is more general, and is nonlinear and stochastic,  analyzing its behavior is significantly more challenging. Due to the presence of noise or error in the SBC dynamics, a consensus cannot be reached, not even in an almost sure or a high probability sense. This in a way reflects the real social dynamics, where there is rarely a consensus. In such a scenario, just like in real democratic societies, we can at best hope for the differences of opinions to remain finite. To capture this scenario, the notion of stability of  opinion dynamics was introduced in \cite{Baccelli2017} and conditions for the stability of SBC dynamics  were established. 

Mathematically, the stability of SBC dynamics is equivalent to the opinion differences between agents reaching a proper stationary distribution. On the other, the dynamics is said to be not stable if the opinion differences do not reach a stationary distribution, which captures the case when opinions of two groups of agents diverge away. The stability results (and their converses) in \cite{Baccelli2017} give conditions in terms of $\mathcal{G}$ and $\{G_{u,v}\}$ for having stable (and not stable) SBC dynamics.

Though the stability results are very important for understanding the highly nonlinear SBC dynamics, they do not shed light on the magnitude of opinion differences in the stable case.
The latter is a quite challenging problem, primarily due to the nonlinear nature of the stochastic dynamics. Hence, as a first step, in this paper, we study a relatively simpler case: SBC dynamics with {\em two agents}. As it will be apparent from the later part of this paper, analyzing the two-agent dynamics requires significant effort. Hopefully, insights obtained from the two-agent case would be useful in analyzing the multi-agent dynamics.

\section{Main Result and Discussion}\label{sec:mainResult}
We consider the following simple two-agent dynamics. Agents $1$ and $2$ interact at all time instants. Thus, at time $t$, if their opinions are $X_1(t)$ and $X_2(t)$, then they are influenced by each other with probability $G(|X_1(t)-X_2(t)|)$. Upon influence, their opinions are updated as, for $i=1,2$,
\[
X_i(t+1) = \frac{X_1(t)+X_2(t)}{2} + n_i(t)\]
and when they are not influenced, the opinions evolve as, for $i=1,2$,
\begin{align*} 
X_i(t+1) &= X_i(t) + n_i(t). \nonumber
\end{align*}

Their opinion difference $Y(t):=X_1(t)-X_2(t)$ is a discrete-time stochastic process and its evolution can be written as: given $Y(t)=y$, $Y(t+1)=\tilde{n}(t)$ with probability $G(|y|)$ and with probability $1-G(|y|)$,
\[Y(t+1) = Y(t) + \tilde{n}(t),\]
where $\tilde{n}(t)=n_1(t)-n_2(t)$ is the difference of two independent zero mean i.i.d. noise processes. We assume that $\tilde{n}(t)$ has a symmetric distribution about its mean.

It was shown in \cite{Baccelli2017} that the opinion difference $Y(t)$ reaches a stationary distribution, i.e., the SBC dynamics is stable, if for some $\delta>0$, $G(x) \gtrsim \frac{1}{x^{2-\delta}}$, where the notation $f(x)\gtrsim g(x)$ means $\liminf_{x\to\infty} \frac{f(x)}{g(x)}>0$. It was also shown that the dynamics is not stable if for some $\delta>0$, $\frac{1}{x^{2+\delta}} \gtrsim G(x)$.  

In this paper, our main result is a bound on the tail probability of a stable two-agent dynamics at a finite time. We establish the bound for the class of sub-Gaussian i.i.d. noise processes.

\begin{definition}[{Sub-Gaussian Random Variable \cite[Sec. 2.3]{lugosi}}]

A random variable $X$ with $\mathbb{E}[X]=0$ is sub-Gaussian with variance parameter $\sigma^2$, denoted by $X \in \mathcal{SG}(\sigma^2)$, if $\; \forall \lambda \in \mathbb{R}$,
\begin{align*}
    \mathbb{E}[\exp{(\lambda X)}] \leq \exp{\Big(\frac{\lambda^2 \sigma^2}{2}\Big)}.
\end{align*} 
\end{definition}

The main result of this paper can be presented as the following simple statement. 
\begin{proposition}
\label{prop:mainresult}

For a two-agent stochastic bounded confidence dynamics with $G(x) \gtrsim \frac{1}{x^{2-\delta}}$ for some $\delta>0$ and i.i.d.  $\tilde{n}(t)\in\mathcal{SG}(\sigma^2)$ for some finite $\sigma$, \[\mathbb{P}_0(|Y(t)| \ge c~t^{\frac{1}{2}-\beta}) \le 4c_1t^2\exp{(-c_2t^{\frac{\delta}{6}-\frac{2\beta}{3}})}\]
for $\beta<\frac{\delta}{4}$ and positive (independent of $t$) constants $c$, $c_1$ and $c_2$. Here, $\mathbb{P}_0(\cdot)$ corresponds to the conditional probability given that the initial difference $Y(0)=0$.

On the other hand, for a two-agent stochastic bounded confidence dynamics with $G(x) \gtrsim \frac{1}{x^{1-\delta}}$ and $\beta<\frac{\delta}{2}$, 
\[\mathbb{P}_0(|Y(t)| \ge c~t^{\frac{1}{2}-\beta}) \le 4c_1t\exp{(-c_2t^{\frac{\delta}{2}-\beta})}.\]
\end{proposition}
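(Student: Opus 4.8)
The plan is to reduce to a one-sided tail bound and then combine two Chernoff-type estimates carried out at two different spatial scales. By the symmetry of the dynamics under $Y \mapsto -Y$ (valid since $\tilde n(t)$ is symmetric about $0$ and $G$ depends only on $|\cdot|$), it suffices to bound $\mathbb{P}_0(Y(t)\ge a)$ with $a=c\,t^{1/2-\beta}$ and double. Fix a scale $A=t^{\alpha}$, with $\alpha$ to be optimized, and let $\mathcal{E}_A$ be the event that $|Y(s)|\le A$ for all $s\le t$. On $\mathcal{E}_A$ one has $G(|Y(s)|)\ge g_A:=c_G A^{-p}$ with $p=2-\delta$: this uses the hypothesis $G(x)\gtrsim x^{-(2-\delta)}$ for $x\ge x_0$ together with continuity and positivity of $G$ on the compact set $[0,x_0]$, so that $\inf_{[0,A]}G\ge c_G A^{-(2-\delta)}$ once $t$ is large. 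Off $\mathcal{E}_A$ we will use only $G\in[0,1]$. The main steps are: (i) a crude bound on $\mathbb{P}(\mathcal{E}_A^{c})$; (ii) a refined bound on $\mathbb{P}(Y(t)\ge a,\ \mathcal{E}_A)$; (iii) balancing $\alpha$.

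For step (i), the one-step conditional moment generating function, using only $G\le1$ and sub-Gaussianity of $\tilde n(t)$, satisfies $\mathbb{E}[e^{\lambda Y(s+1)}\mid Y(s)]\le e^{\lambda^2\sigma^2/2}\bigl(1+e^{\lambda Y(s)}\bigr)$ for $\lambda>0$ (a convex-combination bound in $G$). Unrolling this recursion, taking $\lambda\asymp A/(t\sigma^2)$, and applying a union bound over $s\le t$ and over both tails yields $\mathbb{P}(\mathcal{E}_A^{c})=\mathbb{P}\bigl(\max_{s\le t}|Y(s)|>A\bigr)\le \mathrm{poly}(t)\cdot\exp\bigl(-\tfrac{1}{2\sigma^2}t^{2\alpha-1}\bigr)$, which is nonvacuous as soon as $\alpha>1/2$.

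For step (ii), I would work with the restricted quantity $u_k:=\mathbb{E}\bigl[e^{\lambda Y(k)}\,\mathbb{1}\{\max_{j\le k}|Y(j)|\le A\}\bigr]$, which sidesteps optional-stopping technicalities. The key algebraic observation is that the reset term of the conditional MGF (where $Y(s+1)=\tilde n(s)$) is bounded by $e^{\lambda^2\sigma^2/2}$ irrespective of $Y(s)$, which is strictly smaller than $e^{\lambda Y(s)}e^{\lambda^2\sigma^2/2}$ precisely when $Y(s)>0$; thus the reset genuinely drains mass from the upper tail. Combining this with $G(|Y(s)|)\ge g_A$ on the event, and bounding the $Y(s)\le0$ contribution trivially, gives the contracting recursion $u_{k+1}\le e^{\lambda^2\sigma^2/2}\bigl(1+(1-g_A)u_k\bigr)$. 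For $\lambda$ small enough that $e^{\lambda^2\sigma^2/2}(1-g_A)<1$ — which is possible up to $\lambda\asymp\sqrt{g_A}/\sigma$ — this gives $u_t\le 1+O(1/g_A)$ uniformly in $t$, whence, by Markov's inequality, $\mathbb{P}(Y(t)\ge a,\ \mathcal{E}_A)\le e^{-\lambda a}u_t\le\mathrm{poly}(t)\cdot\exp\bigl(-c'\sqrt{g_A}\,a/\sigma\bigr)=\mathrm{poly}(t)\cdot\exp\bigl(-c''\,t^{\,1/2-\beta-\alpha(2-\delta)/2}\bigr)$ after substituting $g_A=c_G t^{-\alpha(2-\delta)}$ and $a=c\,t^{1/2-\beta}$.

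Adding the two bounds and choosing $\alpha$ to equalize the exponents $2\alpha-1$ and $1/2-\beta-\alpha(2-\delta)/2$ — namely $\alpha=(3-2\beta)/(6-\delta)$, which exceeds $1/2$ exactly when $\beta<\delta/4$ — produces a common exponent $(\delta-4\beta)/(6-\delta)\ge\delta/6-2\beta/3$, giving the first displayed inequality (the polynomial prefactors being absorbed into $t^2$ after enlarging $c_1$). The second inequality follows from the identical scheme with $p=2-\delta$ replaced by $p=1-\delta$ (using the stronger hypothesis $G(x)\gtrsim x^{-(1-\delta)}$); re-optimizing $\alpha$ and then simplifying the resulting exponent to the cleaner lower bound $\delta/2-\beta$ gives the stated form, valid for $\beta<\delta/2$. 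Following the organization announced in the paper, I would first establish everything under the assumption $|\tilde n(t)|\le B$ almost surely, where the maximal inequalities behind step (i) are most transparent, and then remove boundedness by truncating each increment at level $\Theta(\sqrt{\log t})$ and absorbing, via a union bound, the polynomially small probability that some increment exceeds the truncation. The crux — and the main obstacle — is exactly this two-scale coupling: the crude Chernoff bound alone ignores the resets and merely reproduces the diffusive $\sqrt t$ scaling, giving a vacuous statement at level $t^{1/2-\beta}$, whereas the refined, contracting recursion cannot even be started without an a priori bound on $\sup_{s\le t}|Y(s)|$, which is itself what one is trying to prove. Verifying that the recursion truly contracts over a usable range of $\lambda$, and handling the low-confidence region $|y|<x_0$ where the polynomial lower bound on $G$ may fail, are the remaining delicate points.
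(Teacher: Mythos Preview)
Your proposal is correct and follows the same two-scale strategy as the paper: a crude confinement bound plus a refined, contracting Chernoff recursion on the confinement event, with the scale chosen to balance the two. The execution differs in two places. First, the paper works with the \emph{conditional} MGF $\mathbb{E}_0[e^{\lambda Y(t)}\mid A_{t-1}]$ and, to step from conditioning on $A_t$ to conditioning on $A_{t-1}$ in the recursion, invokes a separate symmetry lemma (the law of $Y(t)$ given $A_{t-1}$ is symmetric, so restricting further to $\{|Y(t)|\le d_t\}$ can only decrease $\mathbb{E}[\cosh(\lambda Y(t))]$); your indicator-restricted quantity $u_k=\mathbb{E}\bigl[e^{\lambda Y(k)}\mathbb{1}\{\max_{j\le k}|Y(j)|\le A\}\bigr]$ sidesteps this entirely, since dropping the time-$(k{+}1)$ constraint is just monotonicity of indicators. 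Second, the paper uses a time-varying envelope $d_\tau=D\tau^{1/2+\beta'}$ on a window $[h(t),t]$ with $h(t)=t^\zeta$, bounding the seed term $\mathbb{E}_0[e^{\lambda Y(h(t))}]$ by stochastic domination against the free sub-Gaussian walk $\sum\tilde n$; you use a constant level $A=t^\alpha$ over $[0,t]$ and optimize $\alpha$ directly. Both parameterizations lead (after balancing) to the exponent $(\delta-4\beta)/(6-\delta)\ge\delta/6-2\beta/3$. Two minor remarks: the inequality $G+(1-G)e^{\lambda y}\le 1+(1-g_A)e^{\lambda y}$ holds for \emph{all} $y$ once $G\ge g_A$, so no sign-of-$Y(s)$ case split is needed in step~(ii); and the truncation detour you sketch at the end is unnecessary, since your argument already runs directly for sub-Gaussian $\tilde n$.
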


Consider the process $S(t)=\sum_{\tau=1}^t n^{(b)}(\tau)$, sum of i.i.d. bounded noise $n^{(b)}(\tau)$. We know that for any $\epsilon>0$ and $c>0$ \cite{lugosi} ,
\[\mathbb{P}(|S(t)| \ge c~t^{\frac{1}{2}-\epsilon})=1-o(1).\]
Thus, the first and obvious implication of Proposition~\ref{prop:mainresult} is that the opinion difference in a stable dynamics concentrates around $0$ and the concentration is much stronger than that of the sum of i.i.d. bounded noise. 

A direct corollary of Proposition~\ref{prop:mainresult} is that, for some $a, b>0$,
\[\mathbb{P}_0(\bigcup_{\tau=t}^\infty \{|Y(\tau)| \ge c~\tau^{\frac{1}{2}-\beta}\}) \le a t^2\exp(-b t^{{\frac{\delta}{6}-\frac{2\beta}{3}}}),\]
which follows using union bound.
This means that the probability of the event that $|Y(\tau)|$ does not cross $\tau^{\frac{1}{2}-\beta}$ after $\tau=t$ approaches $1$ fast as $t\to \infty$. This, in turn, implies that $Y(t)$ almost surely remains within an envelope of the shape $t^{\frac{1}{2}-\beta}$ for $\beta>0$.


The opinion difference process $Y(t)$ is a Markov process \cite{Baccelli2017}. A high probability bound of the above kind is not uncommon for well behaved  Markov processes.  In fact, Markov chains with fast decaying stationary distribution would generally result in such bounds. However,  we note that $Y(t)$ is structurally quite different from the Markov chains observed in applications like queuing systems and population dynamics. 

The Markov process $Y(t)$ lies in the class of asymptotically drift zero Markov processes, i.e., its expected drift at $Y(t)=y$ tends to zero as $|y|\to \infty$. This is because the probability of influence decays with increasing opinion difference. Note that even the stable dynamics have asymptotically zero drift. 
Furthermore, unlike queuing processes, $|Y(t)|$ has unbounded jumps. Drift zero Markov processes, as well as Markov processes with unbounded jumps, are uncommon among popular Markov chains seen in the area of information networks.

Though the phenomenon of zero drift is in direct opposition to strong concentration, unbounded jumps towards zero are conducive to concentration. However, the jumps are probabilistic and the probability decays with increasing opinion difference. In that sense, it is not clear whether $Y(t)$ should have a strong concentration around zero or not. Thus, it can be said that Proposition~\ref{prop:mainresult} settles this interesting dilemma affirmatively.

\section{Proof Outline}
\label{sec:proofOutline}
In this section, we provide intuition behind the proof techniques and outline the proof of Proposition~\ref{prop:mainresult}. We start with a relatively simpler case of bounded $\tilde{n}(t)$ and discuss the basics of our proof technique in this context. Then, we extend that proof technique to sub-Gaussian noise in Proposition~\ref{prop:mainresult}.

\subsection{Bounded Noise: A Useful First Step}
\label{sec:boundedN}
As discussed above, the opinion difference process $Y(t)$ is a Markov chain \cite{Baccelli2017}
with unbounded state-dependent jumps and asymptotically zero drift. As a result, its analysis is intricate than the usual Markov chains \cite{Baccelli2017}. So, towards proving the main result in Proposition~\ref{prop:mainresult}, as a first step, we consider a relatively simpler setting: $\tilde{n}(t)$ is bounded, i.e., it has a support $[-D, D]$ for some $D>0$ and $G(x) \gtrsim \frac{1}{|x|^{1-\delta}}$.


\begin{proposition}
\label{prop:boundedSimpleStatement}
Consider a two-agent stable dynamics with $G(x) \gtrsim \frac{1}{|x|^{1-\delta}}$ for some $\delta>0$, and bounded noise model with the assumed characteristics. Let $k=c~t^{1/2-\beta}$ for some $\beta<\frac{\delta}{2}$ and $c>0$. Then, for $c_1, c_2>0$,
\begin{align*}
    \mathbb{P}_0(|Y(t)| \geq k) \le
    c_1 t \exp{(-c_2 t^{\frac{\delta}{2}-{\beta}})}.
\end{align*}
\end{proposition}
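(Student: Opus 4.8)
The plan is to show that $|Y(t)|\ge k$ can occur only if, somewhere in $\{0,\dots,t\}$, the two agents fail to influence each other for an unusually long run of consecutive steps while the difference stays below level $k$, and then to bound the probability of such a run. On a run where $|Y|\le k$ throughout, the per-step ``reset'' probability (a mutual-influence event, which sends the difference to $\tilde n(t)\in[-D,D]$) is at least $p_k:=\inf_{0\le x\le k}G(x)$, which by the hypothesis $G(x)\gtrsim x^{-(1-\delta)}$ is of order $k^{-(1-\delta)}$; a bounded-increment walk needs at least $\Theta(k)$ steps to travel distance $k$, so such a run lasts at least $m:=\lfloor k/(2D)\rfloor$ steps and hence has probability at most $(1-p_k)^m\le\exp(-\Theta(k^{\delta}))$, after which a union bound over the $O(t)$ possible starting times of the run closes the argument.

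\textbf{Deterministic reduction.} First I would make the above precise. Since $k=ct^{1/2-\beta}\to\infty$, assume $k\ge 2D$; the finitely many smaller $t$, as well as $t$ with $k>tD$ (where $\{|Y(t)|\ge k\}=\emptyset$ because $|Y(t)|\le tD$ always, as each step either resets to a value in $[-D,D]$ or adds a value in $[-D,D]$), are absorbed into the constant $c_1$. On $\{|Y(t)|\ge k\}$ let $\rho\le t$ be the last time that is $0$ or immediately follows a reset, so $|Y(\rho)|\le D\le k/2$, no reset occurs at times $\rho,\dots,t-1$, and with $W_j:=\sum_{u=\rho}^{\rho+j-1}\tilde n(u)$ we have $Y(\rho+j)=Y(\rho)+W_j$ and $|W_{t-\rho}|\ge|Y(t)|-|Y(\rho)|\ge k/2$. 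Letting $j^\ast:=\min\{j:|W_j|\ge k/2\}$, the bound $|\tilde n|\le D$ forces $j^\ast\ge m$, and for every $j<j^\ast$ one has $|Y(\rho+j)|\le D+|W_j|<k$. Therefore $\{|Y(t)|\ge k\}\subseteq\bigcup_{s=0}^{t-m}E_s$, where $E_s$ is the event that no reset occurs at times $s,\dots,s+m-1$ and $|Y(s+j)|\le k$ for all $j=0,\dots,m-1$.

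\textbf{Probabilistic bound and bookkeeping.} Next I would bound $\mathbb{P}_0(E_s)$. Since $G$ is positive and $G(x)\gtrsim x^{-(1-\delta)}$ with $\delta\le 1$ (larger $\delta$ is impossible as $G\le 1$), $p_k\ge c_G k^{-(1-\delta)}$ for all large $k$. Peeling $E_s$ one step at a time against the natural filtration of the chain: conditioned on the history up to step $s+j$, the chance of ``no reset'' at that step is $1-G(|Y(s+j)|)$, which is $\le 1-p_k$ on $\{|Y(s+j)|\le k\}$; iterated expectations give $\mathbb{P}_0(E_s)\le(1-p_k)^m\le\exp(-mp_k)$, uniformly in $s$. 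A union bound then gives $\mathbb{P}_0(|Y(t)|\ge k)\le(t+1)\exp(-mp_k)$. With $k=ct^{1/2-\beta}$, $m$ is of order $t^{1/2-\beta}$ and $p_k$ of order $t^{-(1/2-\beta)(1-\delta)}$, so $mp_k\gtrsim t^{(1/2-\beta)\delta}$; because $\delta\le 1$ and $\beta>0$, $(1/2-\beta)\delta=\tfrac{\delta}{2}-\beta\delta\ge\tfrac{\delta}{2}-\beta$, and the claimed bound $c_1 t\exp(-c_2 t^{\delta/2-\beta})$ follows after choosing $c_1,c_2>0$ to absorb constants and the small-$t$ cases.

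\textbf{Main obstacle.} The hard part is the deterministic reduction: realizing that one should union over a \emph{short} no-reset run that additionally stays below $k$. The ``stays below $k$'' clause is what makes the reset-probability lower bound $p_k$ nontrivial, and it is obtained for free by truncating the run at the first crossing of level $k/2$; the boundedness of $\tilde n$ is what forces that run to last $\Theta(k)$ steps rather than being arbitrarily short, which is why this argument is special to the bounded-noise setting. Once the reduction is in hand, the rest is routine. For the sub-Gaussian case of Proposition~\ref{prop:mainresult} one must additionally absorb the rare large jumps of $\tilde n$ (e.g.\ by conditioning on a high-probability event on which all noise increments are $O(\log t)$-bounded), which is presumably the source of the weaker exponent $\delta/6-2\beta/3$ there.
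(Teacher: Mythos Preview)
Your proof is correct and takes a genuinely different route from the paper's. The paper derives Proposition~\ref{prop:boundedSimpleStatement} from Theorem~\ref{theorem_bounded}, which is proved by a Chernoff/MGF argument: one writes the one-step recursion
\[
\mathbb{E}_0[e^{\lambda Y(t+1)}]\le M_{\tilde n}(\lambda)\bigl(\mathbb{E}_0[e^{\lambda Y(t)}](1-G(Dt))+1\bigr),
\]
using the crude deterministic bound $|Y(\tau)|\le D\tau$ to replace $G(|Y(\tau)|)$ by $G(Dt)$, unrolls it into a geometric-type sum, and then picks $\lambda=\mathcal{O}(t^{\delta/2-1/2})$ so that $\gamma(\lambda)(1-G(Dt))<1$. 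Your argument instead performs a last-reset decomposition: the event $\{|Y(t)|\ge k\}$ forces a run of at least $m\asymp k/D$ consecutive non-influence steps during which $|Y|$ stays below $k$, and on such a run the per-step reset probability is at least $p_k\asymp k^{-(1-\delta)}$, whence $(1-p_k)^m\le\exp(-c\,k^\delta)$ and a union bound over the $\le t$ possible starting times finishes.

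What each approach buys: your argument is more elementary (no moment generating functions, no optimization in $\lambda$) and in fact yields the sharper exponent $(1/2-\beta)\delta=\delta/2-\beta\delta\ge \delta/2-\beta$, so the proposition follows a fortiori. The paper's MGF recursion, on the other hand, is the template that they reuse for the sub-Gaussian case (Theorem~\ref{theorem_sg}): there the crude bound $|Y(\tau)|\le D\tau$ is replaced by the high-probability envelope $|Y(\tau)|\le D\tau^{1/2+\beta'}$, and the recursion is run \emph{conditionally} on that event. This is somewhat different from your suggested extension (conditioning on all individual increments being $O(\log t)$): the paper conditions on the \emph{process} staying in an envelope, not on the increments being small, which lets them feed the improved bound on $G(|Y(\tau)|)$ directly back into the same MGF recursion. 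Your combinatorial approach could presumably also be extended along the lines you sketch, but the paper's route makes the bounded-noise and sub-Gaussian proofs structurally parallel.
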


The above result is a direct consequence of the following theorem, and as we discuss later, the derivation of which provides a foundation for the more general case considered in Proposition~\ref{prop:mainresult}. In the rest of this paper, $\mathcal{O}$ represents the big-O notation.

\begin{theorem}\label{theorem_bounded}
Let $k=c~t^{1/2-\beta}$ for some $\beta>0$ and $c>0$. For a two-agent stable dynamics with $G(x) \gtrsim \frac{1}{|x|^{1-\delta}}$ for some $\delta>0$, and bounded noise model with the assumed characteristics, there exists a parameter regime, $\beta<\frac{\delta}{2}$, for which the tail probability of opinion difference decays exponentially with time $t$. That is, for $\lambda=\mathcal{O}(t^{\frac{\delta}{2}-\frac{1}{2}}),\alpha=\mathcal{O}(t^{\delta-1}), \gamma(\lambda)=\mathcal{O}(1)$,
\begin{align*}
    \mathbb{P}_0(|Y(t)|\geq k)) \leq 2\Big(\frac{\gamma(\lambda)}{\alpha}+1\Big) \exp{(-\lambda k)}.
\end{align*}
\end{theorem}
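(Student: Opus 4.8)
The plan is to combine a Chernoff bound with a geometric (Foster--Lyapunov-type) drift inequality for an exponential test function, choosing the Chernoff parameter $\lambda$, the drift contraction $\alpha$, and an effective state-space truncation all as explicit powers of $t$. Concretely, I would fix $\lambda>0$, set $\phi(\lambda):=\mathbb{E}[e^{\lambda\tilde{n}(t)}]$, and work with $V_{+}(y)=e^{\lambda y}$ and $V_{-}(y)=e^{-\lambda y}$. The update rule ($Y(t+1)=\tilde{n}(t)$ with probability $G(|y|)$, else $Y(t+1)=y+\tilde{n}(t)$) gives the exact identity
\[
\mathbb{E}\bigl[V_{+}(Y(t+1))\mid Y(t)=y\bigr]=\phi(\lambda)\bigl[(1-G(|y|))e^{\lambda y}+G(|y|)\bigr],
\]
and symmetrically for $V_{-}$. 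The aim is a drift inequality $\mathbb{E}[V_{\pm}(Y(t+1))\mid Y(t)=y]\le(1-\alpha)V_{\pm}(y)+\gamma(\lambda)$; rearranging, this reduces essentially to the requirement $\phi(\lambda)(1-G(|y|))\le 1-\alpha$, i.e.\ $G(|y|)\gtrsim\phi(\lambda)-1+\alpha$, with the terms $\phi(\lambda)G(|y|)$ and the contribution of states $y\le 0$ (where $e^{\lambda y}\le 1$) absorbed into a constant $\gamma(\lambda)=\mathcal{O}(1)$; here the bounded-support estimate $\phi(\lambda)-1=\mathcal{O}(\lambda^{2})$ is used.

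The obstruction is that $G(|y|)\to 0$, so this inequality cannot hold uniformly in $y$ --- the asymptotically-zero-drift phenomenon. The way around it is to exploit the bounded noise: since $\tilde{n}(t)\in[-D,D]$ and $Y(0)=0$, the value $Y(s)$ is a sum of at most $s$ noise terms, hence deterministically $|Y(s)|\le sD\le tD$ for all $s\le t$. So the drift inequality is only needed on the reachable window $[-tD,tD]$, where $G(|y|)\ge G(tD)\gtrsim (tD)^{-(1-\delta)}$. Taking $\lambda=\lambda_{0}\,t^{(\delta-1)/2}$ and $\alpha=\alpha_{0}\,t^{\delta-1}$ for small enough constants $\lambda_{0},\alpha_{0}$ makes $\phi(\lambda)-1+\alpha=\mathcal{O}(t^{\delta-1})$ comparable to $G(tD)$, so the drift condition becomes a comparison of constants, and one obtains $\gamma(\lambda)=\mathcal{O}(1)$, $\alpha<1$ for large $t$ --- exactly the orders in the statement. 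Iterating the drift from $Y(0)=0$ gives $\mathbb{E}_{0}[V_{\pm}(Y(t))]\le(1-\alpha)^{t}+\gamma(\lambda)/\alpha\le 1+\gamma(\lambda)/\alpha$, and then Markov's inequality together with $e^{\lambda|Y(t)|}\le e^{\lambda Y(t)}+e^{-\lambda Y(t)}$ yields
\[
\mathbb{P}_{0}(|Y(t)|\ge k)\le e^{-\lambda k}\bigl(\mathbb{E}_{0}[V_{+}(Y(t))]+\mathbb{E}_{0}[V_{-}(Y(t))]\bigr)\le 2\Bigl(\tfrac{\gamma(\lambda)}{\alpha}+1\Bigr)e^{-\lambda k}.
\]
With $k=c\,t^{1/2-\beta}$ one has $\lambda k\asymp t^{\delta/2-\beta}$, which diverges --- so the bound decays --- precisely when $\beta<\frac{\delta}{2}$.

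The hard part will be the coupled tuning of $\lambda$, $\alpha$, and the truncation level: the contraction $\alpha$ cannot exceed $\approx t^{\delta-1}$ because that is all the drift the influence function supplies near the boundary $|y|\approx tD$ of the reachable set, while $\lambda$ must simultaneously be large enough for $\lambda k\to\infty$ and small enough for $\phi(\lambda)-1=\mathcal{O}(\lambda^{2})$ to stay dominated by $G(tD)$. Verifying that the single scaling $\lambda\asymp\sqrt{\alpha}\asymp t^{(\delta-1)/2}$ meets all these constraints at once, and extracting the explicit constants from $G(x)\gtrsim|x|^{-(1-\delta)}$ and from the bounded-support bound on $\phi$, is the technical crux; the remaining geometric iteration and Markov step are routine.
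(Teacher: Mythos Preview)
Your proposal is correct and is essentially the paper's own argument, only recast in Foster--Lyapunov language: the paper obtains the same one-step inequality $\mathbb{E}_0[e^{\lambda Y(t+1)}]\le M_{\tilde n}(\lambda)\bigl((1-G(Dt))\,\mathbb{E}_0[e^{\lambda Y(t)}]+1\bigr)$ from boundedness $|Y(s)|\le Ds\le Dt$, expands it explicitly into a geometric sum, and chooses the identical scalings $\lambda\asymp t^{(\delta-1)/2}$, $\alpha\asymp t^{\delta-1}$ via $(1+\lambda^2D^2/2)(1-G(Dt))=1-\alpha$. Your drift-iteration packaging is slightly cleaner than the paper's term-by-term expansion, but the content, the constants, and the final bound $2(\gamma(\lambda)/\alpha+1)e^{-\lambda k}$ coincide.
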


Proof of this result is presented in Appendix~\ref{proof_bound_1}. Let $\mathbb{E}_0[.]$ represent $\mathbb{E}[.|Y(0)=0]$. The main part of the proof involves obtaining a suitable upper-bound on $\mathbb{E}_0[e^{\lambda  Y(t) }]$, which is done in three main steps.

First, for any $t\ge 0$, obtain an upper-bound on $\mathbb{E}_0[e^{\lambda  Y(t+1) }]$ in terms of $\mathbb{E}_0[e^{\lambda  Y(t) }]$, $\mathbb{E}[e^{\lambda  \tilde{n}(t) }]$ and $G(|Y(t)|)$:
\begin{align*}
    \mathbb{E}_0[e^{\lambda  Y(t+1) }]
    &\leq \mathbb{E}[e^{\lambda  \tilde{n}(t) }]\Big(\mathbb{E}_0[e^{\lambda Y(t)}(1-G(|Y(t)|))]+1\Big).
\end{align*}

Second, note that as the noise is bounded, $|Y(\tau)|\le D \tau$ for any $\tau$. Using this fact along with the recursive relation, for any $t>0$, we obtain a bound on $\mathbb{E}_0[e^{\lambda  Y(t) }]$ as sum of products of the moment generating function of noise at $\lambda$, $M_{\tilde{n}}(\lambda)$, and $\{G(D i): 0\le i\le t\}$:
\begin{align*}
    \mathbb{E}_0[e^{\lambda Y(t)}]&\leq M_{\tilde{n}}(\lambda) + \prod_{i=0}^{t-1}M_{\tilde{n}} (\lambda)(1-G(Di))\nonumber\\&+\sum \limits_{i=0}^{t-2} M_{\tilde{n}}(\lambda)^{t-i} \prod_{j=0}^{t-2-i}(1-G(D(t-1-j)).\nonumber
\end{align*}

Third, by using Hoeffding's lemma for the moment generating function of noise, $M_{\tilde{n}}(\lambda)$, and by restricting $\lambda$ to $o(1)$, we obtain the following bound on $\mathbb{E}_0[e^{\lambda  Y(t+1) }]$ in terms of sum of products involving only $\lambda$ and $G(Dt)$:  
\begin{align*}
    \gamma(\lambda)\sum \limits_{i=0}^{t-1} \Big[ \gamma(\lambda)(1-G(Dt))\Big]^{i}\nonumber+\Big[\gamma (\lambda)(1-G(Dt))\Big]^t.\nonumber
\end{align*}

Finally, by the use of the Chernoff bound and a suitable choice of $\lambda$, the bound in Theorem~\ref{theorem_bounded} follows. Please see Appendix \ref{proof_bound_1} for a detailed proof.

The above proof technique has an interesting aspect to it. It starts with an initial weak bound on $|Y(t)|$ and refines it to a much stronger bound. This is also the case in the proof of Proposition~\ref{prop:mainresult}. However, in the latter case, the initial weak bound is tighter and hence, so is the final bound. As a result, the main result applies to all influence functions for which the two-agent SBC dynamics are stable.

\subsection{Extending to Sub-Gaussian Noise}
\label{sec:subG}
Proposition \ref{prop:mainresult} is a consequence of the following more general result. Define an event $A_t=\bigcap\limits_{\tau=h(t)}^t \{|Y(\tau)| \leq d_\tau\}$ where $h(t)$ is order-wise smaller than $t$. Recall that $k=c~t^{1/2-\beta}$ for some $\beta>0$ and $c>0$.
\begin{theorem}\label{theorem_sg}
Consider a two-agent stable dynamics with $G(x) \gtrsim \frac{1}{|x|^{2-\delta}}$ for some $\delta>0$, and sub-Gaussian noise model with the assumed characteristics. Then, with $d_{\tau}=D~\tau^{\frac{1}{2}+\beta'}$ for some $\beta'>0$, there exists a parameter regime, $\beta<\frac{\delta}{4}$, for which the tail probability of opinion difference decays exponentially with time $t$. That is, for $c'>0$, $\gamma(\lambda)=\mathcal{O}(1), \lambda=\mathcal{O}(t^{k'(\frac{\delta}{2}-1)}), \alpha=\mathcal{O}(t^{k'(\delta-2)})$ where $k'=\frac{1}{2}+\beta'$,
\begin{multline*}
    \mathbb{P}_0(|Y(t)| \geq k|Y(0)=0) \leq 2(t-h(t))\exp{(-c'h(t)^{2\beta'})}\\+2\Big[\exp{\Big(\frac{\lambda^2\sigma^2h(t)}{2}}\Big)+\frac{\gamma(\lambda)}{\alpha}\Big]
    \exp{(-\lambda k)}.
\end{multline*}
\end{theorem}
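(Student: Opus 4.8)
The plan is to prove Theorem~\ref{theorem_sg} by combining a ``good event'' decomposition with the recursive moment‑generating‑function argument already used for the bounded‑noise Theorem~\ref{theorem_bounded}. For $\tau\ge h(t)$ set $A_\tau=\bigcap_{j=h(t)}^{\tau}\{|Y(j)|\le d_j\}$ with $d_j=D\,j^{k'}$, $k'=\tfrac12+\beta'$, so that $A_t$ is exactly the event in the statement; then
\[
\mathbb{P}_0(|Y(t)|\ge k)\ \le\ \mathbb{P}_0(A_t^c)\ +\ \mathbb{P}_0(|Y(t)|\ge k,\,A_t),
\]
and the two summands will yield the two terms of the claimed bound. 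The truncation time $h(t)$ is needed because near $\tau=0$ the process behaves like an unrestrained random walk and admits no useful deterministic envelope; we absorb this ``burn‑in'' into the first $h(t)$ steps and exploit the self‑correcting drift of the dynamics only afterwards.

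\emph{Bounding $\mathbb{P}_0(A_t^c)$.} Started from $Y(0)=0$, between two consecutive influence (reset) events $Y$ performs an ordinary random walk in the i.i.d.\ noise; hence, if $R_\tau\in\{0,\dots,\tau-1\}$ is the last reset time before $\tau$ (or $0$), then $Y(\tau)=\sum_{s=R_\tau}^{\tau-1}\tilde n(s)$ and therefore, \emph{deterministically},
\[
|Y(\tau)|\ \le\ \max_{0\le r\le \tau-1}\Big|\sum_{s=r}^{\tau-1}\tilde n(s)\Big|.
\]
The right‑hand side no longer involves the random reset time, so Doob's maximal inequality applied to the submartingale $e^{\lambda\sum\tilde n}$, optimized in $\lambda$, gives $\mathbb{P}_0(|Y(\tau)|> d_\tau)\le 2\exp\!\big(-d_\tau^2/(2\sigma^2\tau)\big)=2\exp\!\big(-\tfrac{D^2}{2\sigma^2}\tau^{2\beta'}\big)$. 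A union bound over $\tau\in[h(t),t]$, using $\tau\ge h(t)$, then yields $\mathbb{P}_0(A_t^c)\le 2(t-h(t))\exp\!\big(-c'h(t)^{2\beta'}\big)$ with $c'=D^2/(2\sigma^2)$.

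\emph{Bounding $\mathbb{P}_0(|Y(t)|\ge k,A_t)$.} Since the noise and the event $A_t$ are symmetric under $Y\mapsto -Y$, $Y(t)\mathbf{1}_{A_t}\stackrel{d}{=}-Y(t)\mathbf{1}_{A_t}$, so Chernoff's bound gives $\mathbb{P}_0(|Y(t)|\ge k,A_t)\le 2e^{-\lambda k}\,\mathbb{E}_0\!\big[e^{\lambda Y(t)}\mathbf{1}_{A_t}\big]$ for $\lambda>0$. To control the latter I carry the indicator through the recursion of Theorem~\ref{theorem_bounded}: conditioning on $\mathcal F_\tau$ and using $A_{\tau+1}\subseteq A_\tau$ together with the fact that on $A_\tau$ (for $\tau\ge h(t)$) one has $|Y(\tau)|\le d_\tau$, hence $1-G(|Y(\tau)|)\le 1-G(d_\tau)\le 1-\alpha$ with $\alpha=\Theta\!\big(d_t^{-(2-\delta)}\big)=\Theta\!\big(t^{k'(\delta-2)}\big)$, gives for $h(t)\le\tau<t$
\[
\mathbb{E}_0\!\big[e^{\lambda Y(\tau+1)}\mathbf{1}_{A_{\tau+1}}\big]\ \le\ \gamma(\lambda)\Big(1+(1-\alpha)\,\mathbb{E}_0\!\big[e^{\lambda Y(\tau)}\mathbf{1}_{A_\tau}\big]\Big),
\]
where $\gamma(\lambda)=M_{\tilde n}(\lambda)\le e^{\lambda^2\sigma^2/2}$. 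Choosing $\lambda=\Theta(\sqrt{\alpha})=\Theta\!\big(t^{k'(\delta/2-1)}\big)$ with a small enough constant makes $\gamma(\lambda)(1-\alpha)<1$ while keeping $\gamma(\lambda)=\mathcal O(1)$ and $1-\gamma(\lambda)(1-\alpha)=\Theta(\alpha)$; unrolling the geometric recursion from $h(t)$ to $t$ then gives a bound of the form $\mathbb{E}_0[e^{\lambda Y(t)}\mathbf{1}_{A_t}]\lesssim \gamma(\lambda)/\alpha+\mathbb{E}_0[e^{\lambda Y(h(t))}]$. For the base term, symmetry of $\tilde n$ — which makes $Y(\tau)$ symmetric about $0$ for every $\tau$, so that $\mathbb{E}_0[G(|Y(\tau)|)(1-e^{\lambda Y(\tau)})]=\mathbb{E}_0[G(|Y(\tau)|)(1-\cosh\lambda Y(\tau))]\le 0$ — yields $\mathbb{E}_0[e^{\lambda Y(\tau+1)}]\le M_{\tilde n}(\lambda)\,\mathbb{E}_0[e^{\lambda Y(\tau)}]$, hence $\mathbb{E}_0[e^{\lambda Y(h(t))}]\le M_{\tilde n}(\lambda)^{h(t)}\le \exp(\lambda^2\sigma^2 h(t)/2)$. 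Assembling the two parts gives the stated bound.

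The main obstacle is the tension between the two error terms. Making $A_t$ overwhelmingly likely wants $h(t)$ large and the envelope $d_\tau$ generous (i.e.\ $\beta'$ not too small), whereas the self‑correction estimate $1-G(|Y(\tau)|)\le 1-\alpha$ deteriorates as $d_\tau$ grows, and the burn‑in factor $\exp(\lambda^2\sigma^2 h(t)/2)=\exp(\Theta(\alpha h(t)))$ forces $h(t)$ to be order‑wise smaller than $t$ (roughly $h(t)=\mathcal O(t^{k'(2-\delta)})$) for the prefactor to stay polynomial. Reconciling these, and in particular checking that there is a nonempty regime in which $\lambda k=\Theta\!\big(t^{\,k'(\delta/2-1)+1/2-\beta}\big)\to\infty$ — which, letting $\beta'\downarrow 0$, amounts to $\beta<\delta/4$ — is where the real work lies; this also explains why the admissible exponent drops to $\delta/4$, compared with $\delta/2$ in Theorem~\ref{theorem_bounded}, where the cruder envelope $|Y(\tau)|\le D\tau$ is afforded by the stronger hypothesis $G\gtrsim|x|^{-(1-\delta)}$. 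Finally, Proposition~\ref{prop:mainresult} follows by optimizing $\beta'$ and $h(t)=t^\theta$ in Theorem~\ref{theorem_sg} so as to balance the tail $t^{2\theta\beta'}$ coming from $A_t^c$ against the Chernoff exponent, which yields the rate $t^{\delta/6-2\beta/3}$.
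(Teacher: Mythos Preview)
Your proof is correct and follows the same high-level architecture as the paper's: decompose into a ``good'' event $A_t$ and its complement, bound $\mathbb{P}_0(A_t^c)$ by comparing $|Y(\tau)|$ to a sub-Gaussian random walk, and bound the Chernoff term by a recursive MGF estimate that exploits $G(|Y(\tau)|)\ge G(d_t)$ on $A_\tau$, with the first $h(t)$ steps handled as a raw random walk. The choice of $\lambda,\alpha$ and the resulting regime $\beta<\delta/4$ match the paper exactly.

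There are, however, two technical differences worth flagging. First, you work with $\mathbb{E}_0[e^{\lambda Y(\tau)}\mathbf 1_{A_\tau}]$ rather than the paper's $\mathbb{E}_0[e^{\lambda Y(\tau)}\mid A_{\tau-1}]$. Because $\mathbf 1_{A_{\tau+1}}\le \mathbf 1_{A_\tau}$ and $\mathbf 1_{A_\tau}$ is $\mathcal F_\tau$-measurable, your recursion $u_{\tau+1}\le \gamma(\lambda)\bigl(1+(1-\alpha)u_\tau\bigr)$ is immediate; the paper instead needs an extra monotonicity step (its Claim~\ref{restriction}, comparing $\mathbb{E}_0[e^{\lambda Y(t)}\mid A_t]$ to $\mathbb{E}_0[e^{\lambda Y(t)}\mid A_{t-1}]$ via the $\cosh$ argument) to make the recursion close. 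Your formulation is cleaner here. Second, for $\mathbb{P}_0(A_t^c)$ and for the burn-in bound $\mathbb{E}_0[e^{\lambda Y(h(t))}]\le e^{\lambda^2\sigma^2 h(t)/2}$, the paper invokes stochastic ordering $|Y(\tau)|\le_{\mathrm{st}}|Y'(\tau)|$ with the pure random walk $Y'$; you instead use the last-reset representation $Y(\tau)=\sum_{s=R_\tau}^{\tau-1}\tilde n(s)$ together with Doob's maximal inequality, and a direct symmetry computation $\mathbb{E}_0[G(|Y(\tau)|)(e^{\lambda Y(\tau)}-1)]\ge 0$. Both routes give the same exponents; yours is more self-contained, while the paper's stochastic-ordering statements are asserted rather than proved in detail.
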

The first part of proposition \ref{prop:mainresult} follows from Theorem~\ref{theorem_sg} for an appropriate choice of $h(t)=t^\zeta$ with $\zeta<1-\frac{\delta}{2}$ and $\beta'\in \Big(\frac{\frac{\delta}{4}-\beta}{3-\frac{3\delta}{2}},\frac{\frac{\delta}{4}-\beta}{1-\frac{\delta}{2}}\Big)$. The latter part of proposition \ref{prop:mainresult} follows from a similar analysis with $d_\tau=D\tau$.

Proof of Theorem~\ref{theorem_sg} builds on the three step approach discussed in Sec.~\ref{sec:boundedN}. However, proof technique for bounded noise cannot be directly extended to sub-Gaussian noise.
An important ingredient in the proof for the bounded noise case was the bound on $G(|Y(t)|)$ in the second step, which used the fact that $|Y(\tau)|\le D\tau$ for any $\tau$. Clearly, this is not true when the noise is sub-Gaussian. We circumvent this issue by introducing a high probability bound on $|Y(t)|$ instead of a deterministic bound, and by adapting the subsequent proof steps and the final step involving the Chernoff bound according to that high probability bound.

We first discuss the final step involving the Chernoff bound since that would place the changes we make to the three preceding steps in the proof of Theorem~\ref{theorem_bounded} in the right perspective. 

In the proof of Theorem~\ref{theorem_bounded}, the natural Chernoff bound is 
\[\mathbb{P}_0(Y(t)\ge k) \le \mathbb{E}_0[e^{\lambda  Y(t) }] \exp{(-\lambda k)}.\]

In the current setting, we adopt the following useful modification.
\begin{align}
    \mathbb{P}_0(Y(t)\ge k|A_{t-1})  &\le \mathbb{E}_0[e^{\lambda  Y(t) }| A_{t-1}] \exp{(-\lambda k)} \nonumber \\
    \mathbb{P}_0(Y(t)\ge k) & \le \mathbb{P}_0(Y(t)\ge k|A_{t-1})+ 1 - \mathbb{P}_0(A_{t-1}), \label{eq:totalProb}
\end{align}
where, $A_{t-1}$ is an event defined in terms of $\{Y(\tau): 0\le \tau \le t-1\}$. The choice of the event $A_t$ is dictated by the fact that $|Y(t)|$ should be bounded on $A_t$ with high probability and  $1-\mathbb{P}_0(A_{t-1})$ should be rapidly approaching $0$ as $t\to\infty$.

Note that, clearly, $|Y(t)|$ is stochastically smaller than the absolute value of the sum of i.i.d. sub-Gaussian noise $|\sum_{\tau=1}^t \tilde{n}(\tau)|$. Thus, $\mathbb{P}_0(|Y(t)|\ge D~ t^{\frac{1}{2}+\beta'})$
is no more than
\[\mathbb{P}(|\sum_{\tau=1}^t \tilde{n}(\tau)|\ge D~t^{\frac{1}{2}+\beta'}) \le 2~\exp\left(-\frac{D^2}{2 \sigma^2} t^{2\beta'}\right),\]
which fast approaches $0$ as $t\to\infty$. Here, the bound on $\mathbb{P}(|\sum_{\tau=1}^t n(\tau)|\ge D~ t^{\frac{1}{2}+\beta'})$ follows from the concentration inequality for the sum of i.i.d. sub-Gaussian random variables.

Based on the above observations, while keeping the adaptation of the final step involving the Chernoff bound in mind, we adapt the three main steps from the proof of Theorem~\ref{theorem_bounded} as follows.

First, define $A_t=\bigcap\limits_{\tau=h(t)}^t\{|Y(\tau)|\le D~ \tau^{\frac{1}{2}+\beta'}\}$ and obtain an upper-bound on $\mathbb{E}_0[e^{\lambda  Y(t+1) }| A_t]$ in terms of $\mathbb{E}_0[e^{\lambda  Y(t)}|A_t]$, $\mathbb{E}[e^{\lambda  \tilde{n}(t) }]$ and $G(|Y(t)|)$. Then, using the fact that $A_t$ imposes a symmetric constraint on $Y(t)$ and $Y(t)$ has a symmetric distribution, we show that
$\mathbb{E}_0[e^{\lambda  Y(t)}|A_t] \le \mathbb{E}_0[e^{\lambda  Y(t)}|A_{t-1}]$. This gives a recursive upper-bound on  $\mathbb{E}_0[e^{\lambda  Y(t+1) }| A_t]$ in terms of $\mathbb{E}_0[e^{\lambda  Y(t)}|A_{t-1}]$, $\mathbb{E}[e^{\lambda  \tilde{n}(t) }]$ and $G(|Y(t)|)$.

Second, we use the fact that $|Y(t)| \le D~t^{\frac{1}{2}+\beta'}$ on the event $A_t$  to obtain a bound on $\mathbb{E}_0[e^{\lambda  Y(t) }|A_{t-1}]$ as sum of products of the moment generating function of noise at $\lambda$ and $\{G(D~i^{\frac{1}{2}+\beta'}): 0\le i\le t\}$. 

Third, using the sub-Gaussian bound on the moment generating function of noise and by restricting $\lambda$ to $o(1)$, we obtain a bound on $\mathbb{E}_0[e^{\lambda  Y(t) }|A_{t-1}]$ in terms of sum of products involving only $\lambda$ and $G(D~ t^{\frac{1}{2}+\beta'})$. 

Finally, upper-bound on the probability of $A_t$ is obtained using sub-Gaussian concentration and all the bounds are plugged into \eqref{eq:totalProb}.

A detailed proof that essentially formalizes the above steps is presented in Appendix. 

The bound in Proposition~\ref{prop:mainresult} applies to the whole range of influence functions for which dynamics is stable. On the other hand, the bound in Proposition~\ref{prop:boundedSimpleStatement} applies to influence functions satisfying $G(x)\gtrsim \frac{1}{x^{1-\delta}} $. The main reason behind the improvement from Proposition~\ref{prop:boundedSimpleStatement} to Proposition~\ref{prop:mainresult} is the change in the second step of the proof of Theorem~\ref{theorem_bounded}. In the proof for the sub-Gaussian case, the initial bound on $|Y(t)|$ in the second step of the proof of Theorem~\ref{theorem_sg}, though probabilistic, is a significantly tighter bound, which results in a much better final bound. 

\section{Concluding Remarks}
For a stable stochastic bounded confidence opinion dynamics of two agents, we obtained a high probability bound on the opinion difference at a finite time. Our proof technique is based on bounding the conditional moment generating function of the opinion difference on a high probability subset of the sample space and adapting the Chernoff bound accordingly. In future, we look to build on the insights obtained here to address multi-agent dynamics on a social graph.

\section*{Acknowledgement}
The first author's work was supported by the Prime Minister's Research Fellows (PMRF) scheme. The work of AC was partially supported by the Department of Science and Technology, Government of India, under Grant SERB/SRG/2019/001809 and Grant INSPIRE/04/2016/001171.

\bibliographystyle{IEEEtran}
\bibliography{references}

\appendix
\subsection{Proof of Theorem~\ref{theorem_bounded}}
\label{proof_bound_1}
\begin{proof}[\unskip\nopunct]

Assume that $\tilde{n}(t)$ is bounded in $[-D,D]$. Let $M_{\tilde{n}}(\lambda)$ denote its moment generating function at $\lambda$. By using Chernoff bound, for $\lambda>0$,
\begin{align}\label{cher}
    \mathbb{P}_0(\lvert Y(t) \rvert \geq k) = 2\mathbb{P}_0(Y(t) \geq k)  \leq 2\mathbb{E}_0[e^{\lambda  Y(t) }]\exp{(-\lambda k)}.
\end{align}
We now focus on the expectation term in (\ref{cher}). By the law of iterated expectations,
\begin{align*}
   \mathbb{E}_0[e^{\lambda  Y(t+1) }]& = \mathbb{E}_0[\mathbb{E}[e^{\lambda Y(t+1) } | Y(t)]]\\
   &=\mathbb{E}_0[e^{\lambda (Y(t)+\tilde{n}(t))}(1-G(|Y(t)|)+e^{\lambda \tilde{n}(t)}G(|Y(t)|)].
\end{align*}
For any $t>0$, $\tilde{n}(t)$ and $Y(t)$ are independent. Therefore,
\begin{align*}
    \mathbb{E}_0[e^{\lambda  Y(t+1) }]& = \mathbb{E}[e^{\lambda \tilde{n}(t)}]\mathbb{E}_0[e^{\lambda Y(t)}(1-G(|Y(t)|)+G(|Y(t)|)]\\
    &\leq M_{\tilde{n}}(\lambda)\Big(\mathbb{E}_0[e^{\lambda Y(t)}(1-G(|Y(t)|))]+1\Big).
\end{align*}
As $G(.)$ is decreasing in its argument, in time window $[0,t]$, $G(|Y(t)|) \geq G(Dt)$ for any $t$. We have
\begin{align*}
    \mathbb{E}_0[e^{\lambda  Y(t+1) }]\leq M_{\tilde{n}}(\lambda)\Big(\mathbb{E}_0[e^{\lambda Y(t)}](1-G(Dt))+1\Big).
\end{align*}
This recursive inequality can be expanded to obtain
\begin{align}
    \mathbb{E}_0[e^{\lambda Y(t)}]&\leq M_{\tilde{n}}(\lambda)\nonumber\\&+\sum \limits_{i=0}^{t-2} M_{\tilde{n}}(\lambda)^{t-i} \prod_{j=0}^{t-2-i}(1-G(D(t-1-j))\nonumber\\&+ \prod_{i=0}^{t-1}M_{\tilde{n}} (\lambda)(1-G(Di)).\label{rec_bounded}
\end{align}
We exploit the observation that $G(.)$ is decreasing and simplify the RHS of (\ref{rec_bounded}) term by term. Focusing on the second term,
\begin{align*}
    \sum \limits_{i=0}^{t-2} M_{\tilde{n}}(\lambda)^{t-i} &\prod_{j=0}^{t-2-i}(1-G(D(t-1-j))\\&\leq \sum \limits_{i=0}^{t-2} M_{\tilde{n}}(\lambda)^{t-i} \prod_{j=0}^{t-2-i}(1-G(Dt))\\
    &= \sum \limits_{i=0}^{t-2} M_{\tilde{n}}(\lambda)^{t-i}(1-G(Dt))^{t-i-1}\\
    &= M_{\tilde{n}}(\lambda)\sum \limits_{i=1}^{t-1} \Big[ M_{\tilde{n}}(\lambda)(1-G(Dt))\Big]^{i}.
\end{align*}
Now, the third term: 
\begin{align*}
    \prod_{i=0}^{t-1}M_{\tilde{n}} (\lambda)(1-G(Di))
    &\leq M_{\tilde{n}} (\lambda)^t \prod_{i=0}^{t-1}(1-G(Dt))\\
    &= \Big[M_{\tilde{n}} (\lambda)(1-G(Dt))\Big]^t.
\end{align*}
Putting all these terms together in (\ref{rec_bounded}), we get
\begin{align}
    \mathbb{E}_0[e^{\lambda Y(t)}]&\leq M_{\tilde{n}}(\lambda)\sum \limits_{i=0}^{t-1} \Big[ M_{\tilde{n}}(\lambda)(1-G(Dt))\Big]^{i}\nonumber\\&+\Big[M_{\tilde{n}} (\lambda)(1-G(Dt))\Big]^t. \label{c_rec}
\end{align}
By Hoeffding's lemma, 
\begin{align*}
    M_{\tilde{n}}(\lambda) \leq \exp{\Big(\frac{\lambda^2D^2}{2}\Big)}. 
\end{align*}
We assume that $\lambda  \xrightarrow{} 0$ as $t\xrightarrow{} \infty$. That is, for large $t$, $\exp{(\frac{\lambda^2D^2}{2})} = 1 + \frac{\lambda^2D^2}{2} + o(\lambda^4)$ where $\lim \limits_{t\xrightarrow{}\infty} \frac{o(\lambda^4)}{\lambda}=0$. Let $\gamma(\lambda)=1 + \frac{\lambda^2D^2}{2} + o(\lambda^4)$. From (\ref{c_rec}),
\begin{align}
    \mathbb{E}_0[e^{\lambda Y(t)}]&\leq \gamma(\lambda)\sum \limits_{i=0}^{t-1} \Big[ \gamma(\lambda)(1-G(Dt))\Big]^{i}\nonumber\\&+\Big[\gamma (\lambda)(1-G(Dt))\Big]^t.\label{c_bound_e}
\end{align}
Therefore,
\begin{align}
     \mathbb{P}_0(\lvert Y(t) \rvert \geq k) &\leq 2 \Big (\gamma(\lambda)\sum \limits_{i=0}^{t-1} \Big[ \gamma(\lambda)(1-G(Dt))\Big]^{i}\nonumber\\&+\Big[\gamma (\lambda)(1-G(Dt))\Big]^t \Big ) \exp{(-\lambda k)}. \label{c_bound_p}
\end{align}

We observe from (\ref{c_bound_p}) that, as $\lambda$  increases, the exponential term decreases, whereas $\gamma( \lambda)$ increases. One way to choose an optimal $\lambda$ for a better bound is such that $\gamma( \lambda) (1-G(Dt)) <1$.

Based on the stability criterion, we consider $G(x) \gtrsim \frac{1}{|x|^{1-\delta}}$, i.e., for $B, \delta > 0$, $G(|Y(t)|)=\frac{B}{1+|Y(t)|^{1-\delta}}$. For some $\alpha>0$, let $( 1+\frac{\lambda^2D^2}{2})(1-G(Dt))=1-\alpha$ which gives $\lambda = \frac{\sqrt{2}}{D}\sqrt{\frac{B-\alpha(1+(Dt)^{1-\delta})}{(1+(Dt)^{1-\delta}-B)}}$. For $\lambda$ to be real, $\alpha \leq \frac{B}{1+(Dt)^{1-\delta}}$. Pick $\alpha = \frac{B}{2(1+(Dt)^{1-\delta})}$ which in turn gives $\lambda =\frac{1}{D} \sqrt{\frac{B}{(1+(Dt)^{1-\delta}-B)}}$. From (\ref{c_bound_e}), with $\alpha=\mathcal{O}(t^{\delta - 1})$,
\begin{align*}
\mathbb{E}_0[e^{\lambda Y(t) }] &\leq \gamma(\lambda)\frac{1 - (1-\alpha)^t}{\alpha}+1 \leq \frac{\gamma(\lambda)}{\alpha}+1.
\end{align*}
Therefore, 
\begin{align}\label{final_c}
    \mathbb{P}_0(\lvert Y(t) \rvert \geq k) \leq 2\Big(\frac{\gamma(\lambda)}{\alpha}+1\Big) \exp{(-\lambda k)}. 
\end{align}

We have $k=\mathcal{O}(t^{\frac{1}{2}-\beta})$ for some $\beta>0$. 
In (\ref{final_c}), as $\lambda=\mathcal{O}(t^{\frac{\delta}{2} - \frac{1}{2}})$, the bound is useful asymptotically only for $\beta <  \frac{\delta}{2}$, say $\beta =  \frac{\delta}{3}, \frac{\delta}{4}$ and so on. That is, for $\beta <  \frac{\delta}{2}$, (\ref{final_c}) not only verifies with the stability results in the literature but also proves our intuition for $k$. This technique cannot be extended to the case where $G(x) \gtrsim \frac{1}{|x|^{2-\delta}}$ which is quite evident from its available choices of $\beta$.
\end{proof}

\subsection{Proof of theorem \ref{theorem_sg}}
\label{proof_bound_sg}
\begin{proof}[\unskip\nopunct]
We use the following lemmas \ref{conditional_P} and \ref{complement_sg} to prove theorem \ref{theorem_sg}. Proof of the lemmas are detailed in sections \ref{section:conditional_P} and \ref{section:complement_sg} respectively.

Recall the event $A_t=\bigcap\limits_{\tau=h(t)}^t \{|Y(\tau)| \leq d_\tau\}$ where $h(t)=t^\zeta, \; 0<\zeta<1$, and $d_{\tau}=D~\tau^{\frac{1}{2}+\beta'}$ for some $\beta'>0$.
\begin{lemma}\label{conditional_P}
With $G(x) \gtrsim \frac{1}{|x|^{2-\delta}}$ for some $\delta>0$,
\begin{multline*}
    \mathbb{P}_0(|Y(t)| \geq k | A_{t-1})\\ \leq 2\Big[\exp{\Big(\frac{\lambda^2\sigma^2h(t)}{2}}\Big)+\frac{\gamma(\lambda)}{\alpha}\Big] \exp{(-\lambda k)},
\end{multline*}
where $ \gamma(\lambda)=\mathcal{O}(1), \lambda=\mathcal{O}(t^{k'(\frac{\delta}{2}-1)}), \alpha=\mathcal{O}(t^{k'(\delta-2)})$ with $k'=\frac{1}{2}+\beta'$.
\end{lemma}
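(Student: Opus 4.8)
The plan is to transfer the ``three steps plus Chernoff bound'' argument of Theorem~\ref{theorem_bounded} to the \emph{conditional} moment generating function. Fix the target time $t$ and, for $\tau\le t$, read $A_\tau$ as $\bigcap_{s=h(t)}^{\tau}\{|Y(s)|\le d_s\}$ (so $A_\tau$ is the whole sample space when $\tau<h(t)$); write $E_\tau:=\mathbb{E}_0\big[e^{\lambda Y(\tau)}\mid A_{\tau-1}\big]$ and $g:=G(d_t)$. The preliminary observation, used several times, is that given $Y(0)=0$ the joint law of $(Y(0),\dots,Y(t))$ is invariant under a global sign flip --- the update $y\mapsto\tilde n(t)$ or $y\mapsto y+\tilde n(t)$ commutes with negation and $\tilde n(t)$ is symmetric --- so $Y(\tau)$ stays symmetric after conditioning on any event depending only on $\{|Y(s)|\}_s$, in particular on $A_{\tau-1}$.

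\emph{Recursion (Step~1).} Conditioning on $(Y(0),\dots,Y(\tau))$ and using the Markov property, $\mathbb{E}[e^{\lambda Y(\tau+1)}\mid Y(\tau)=y]=M_{\tilde n}(\lambda)\big(G(|y|)+(1-G(|y|))e^{\lambda y}\big)$; on $A_\tau$ one has $|Y(\tau)|\le d_\tau\le d_t$, hence $G(|Y(\tau)|)\ge g$, and since $e^{\lambda Y(\tau)}\ge 0$,
\[
E_{\tau+1}\ \le\ M_{\tilde n}(\lambda)\Big(1+(1-g)\,\mathbb{E}_0\big[e^{\lambda Y(\tau)}\mid A_\tau\big]\Big),\qquad h(t)\le\tau<t.
\]
The step that makes this recursion telescope is $\mathbb{E}_0[e^{\lambda Y(\tau)}\mid A_\tau]\le\mathbb{E}_0[e^{\lambda Y(\tau)}\mid A_{\tau-1}]=E_\tau$: because the conditional law of $Y(\tau)$ given $A_{\tau-1}$ is symmetric, $e^{\lambda Y(\tau)}$ may be replaced by $\cosh(\lambda Y(\tau))$ in that expectation, and $A_\tau=A_{\tau-1}\cap\{|Y(\tau)|\le d_\tau\}$ only adds conditioning on the symmetric event $\{|Y(\tau)|\le d_\tau\}$; as $\cosh(\lambda Y(\tau))$ is nondecreasing and $\mathbf{1}\{|Y(\tau)|\le d_\tau\}$ nonincreasing in $|Y(\tau)|$, the one-variable correlation inequality gives $\mathbb{E}_0[\cosh(\lambda Y(\tau))\mathbf{1}\{|Y(\tau)|\le d_\tau\}\mid A_{\tau-1}]\le \mathbb{E}_0[\cosh(\lambda Y(\tau))\mid A_{\tau-1}]\,\mathbb{P}_0(|Y(\tau)|\le d_\tau\mid A_{\tau-1})$, i.e.\ the extra conditioning cannot raise the conditional MGF. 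Hence $E_{\tau+1}\le M_{\tilde n}(\lambda)(1-g)E_\tau+M_{\tilde n}(\lambda)$ for $h(t)\le\tau<t$.

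\emph{Unrolling, burn-in, and the Chernoff step (Steps~2--3).} For $\tau<h(t)$ there is no deterministic control on $|Y(\tau)|$, so the base value $E_{h(t)}=\mathbb{E}_0[e^{\lambda Y(h(t))}]$ is bounded separately: the same symmetrization ($\cosh\ge1$) without truncation yields $\mathbb{E}_0[e^{\lambda Y(\tau+1)}]\le M_{\tilde n}(\lambda)\,\mathbb{E}_0[e^{\lambda Y(\tau)}]$, so $E_{h(t)}\le M_{\tilde n}(\lambda)^{h(t)}\le\exp(\lambda^2\sigma^2 h(t)/2)$ by the sub-Gaussian bound --- the first term in the lemma. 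Iterating Step~1 from $h(t)$ to $t$ gives $E_t\le \big(M_{\tilde n}(\lambda)(1-g)\big)^{t-h(t)}E_{h(t)}+M_{\tilde n}(\lambda)\sum_{i=0}^{t-h(t)-1}\big(M_{\tilde n}(\lambda)(1-g)\big)^{i}$. Since $\lambda\to0$, put $\gamma(\lambda):=\exp(\lambda^2\sigma^2/2)=1+\tfrac{\lambda^2\sigma^2}{2}+o(\lambda^4)=\mathcal{O}(1)\ge M_{\tilde n}(\lambda)$ and choose $\lambda$ so that $\gamma(\lambda)(1-g)=1-\alpha$ for some $\alpha>0$; with $g=G(Dt^{k'})=\Theta(t^{k'(\delta-2)})$ and $\gamma(\lambda)-1=\Theta(\lambda^2)$, balancing the two contributions forces $\lambda=\mathcal{O}(t^{k'(\delta/2-1)})$ and $\alpha=\mathcal{O}(t^{k'(\delta-2)})$, both $o(1)$, which confirms $\lambda\to0$. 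Then $\big(M_{\tilde n}(\lambda)(1-g)\big)^{t-h(t)}\le(1-\alpha)^{t-h(t)}\le1$ and the geometric sum is at most $\gamma(\lambda)/\alpha$, so $E_t\le\exp(\lambda^2\sigma^2 h(t)/2)+\gamma(\lambda)/\alpha$; finally the Chernoff bound $\mathbb{P}_0(Y(t)\ge k\mid A_{t-1})\le E_t\,e^{-\lambda k}$, doubled by symmetry of the conditional law of $Y(t)$ given $A_{t-1}$ to cover $\{|Y(t)|\ge k\}$, yields the stated inequality.

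\emph{Main obstacle.} The crux is the symmetry-and-correlation step that replaces conditioning on $A_\tau$ by conditioning on $A_{\tau-1}$; without it the recursion does not close. It requires both that $A_\tau$ be a symmetric event and that $d_\tau$ be large enough that $g=G(d_t)$ and $\gamma(\lambda)-1$ sit at the same polynomial order in $t$ --- which is exactly why $d_\tau=D\tau^{1/2+\beta'}$ (with $\beta'$ in the stated range) is imposed --- and the remaining delicate point is keeping the bookkeeping of the sliding-window events $\{A_\tau\}$ and the burn-in threshold $h(t)$ consistent through the iteration so that the estimate on $E_{h(t)}$ feeds correctly into the unrolled recursion.
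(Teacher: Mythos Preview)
Your proof is correct and follows essentially the same route as the paper: the same conditional one-step recursion, the same symmetry step $\mathbb{E}_0[e^{\lambda Y(\tau)}\mid A_\tau]\le\mathbb{E}_0[e^{\lambda Y(\tau)}\mid A_{\tau-1}]$ (the paper phrases it as ``$f_{Y(t)|A_t}$ is a restriction of $f_{Y(t)|A_{t-1}}$'' and uses the monotonicity of $\cosh$, which is exactly your correlation-inequality argument), the same unrolling, and the same choice of $\lambda,\alpha$ with $\gamma(\lambda)(1-G(d_t))=1-\alpha$ followed by the Chernoff bound. The only cosmetic difference is the burn-in: you bound $E_{h(t)}$ via the direct MGF inequality $\mathbb{E}_0[e^{\lambda Y(\tau+1)}]\le M_{\tilde n}(\lambda)\,\mathbb{E}_0[e^{\lambda Y(\tau)}]$, while the paper invokes the stochastic domination $Y(t)\le_{st}Y'(t)=\sum_{s<t}\tilde n(s)$; both give $\exp(\lambda^2\sigma^2 h(t)/2)$.
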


\begin{lemma} \label{complement_sg}
Recall that $d_{\tau}=D~\tau^{\frac{1}{2}+\beta'}$ for some $\beta'>0$ and $D > 0$. Then, for $c'>0$,
\begin{align*}
    \mathbb{P}_0(A_{t-1}^\mathrm{C})\leq 2(t-h(t))\exp{(-c'h(t)^{2 \beta'})}.
\end{align*}
\end{lemma}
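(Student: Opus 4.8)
Proof proposal for Lemma~\ref{complement_sg}.

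The plan is to bound $\mathbb{P}_0(A_{t-1}^{\mathrm C})$ by a union bound over the time indices appearing in the definition of $A_{t-1}$, and then to control each individual term $\mathbb{P}_0(|Y(\tau)|\ge d_\tau)$ by a sub-Gaussian tail estimate. First I would write
\[
\mathbb{P}_0(A_{t-1}^{\mathrm C})=\mathbb{P}_0\Big(\bigcup_{\tau}\{|Y(\tau)|> d_\tau\}\Big)\le \sum_{\tau}\mathbb{P}_0\big(|Y(\tau)|\ge d_\tau\big),
\]
where $\tau$ ranges over the (at most $t-h(t)$) indices between $h(t)$ and $t-1$. It then suffices to show each summand is at most $2\exp(-c'\tau^{2\beta'})$: since every such $\tau\ge h(t)$ and $x\mapsto \exp(-c'x)$ is decreasing, summing at most $t-h(t)$ terms gives $2(t-h(t))\exp(-c'h(t)^{2\beta'})$, as claimed.

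For the per-index estimate I would reuse the first step of the proof of Theorem~\ref{theorem_bounded}: conditioning on $Y(\tau)$, using that $\tilde n(\tau)$ is independent of $Y(\tau)$ and that $0\le 1-G(\cdot)\le 1$, one gets the recursion $\mathbb{E}_0[e^{\lambda Y(\tau+1)}]\le M_{\tilde n}(\lambda)\big(\mathbb{E}_0[e^{\lambda Y(\tau)}]+1\big)$ with $\mathbb{E}_0[e^{\lambda Y(0)}]=1$ (recall $Y(0)=0$). Unrolling and using $M_{\tilde n}(\lambda)\ge 1$ gives $\mathbb{E}_0[e^{\lambda Y(\tau)}]\le \sum_{i=1}^{\tau}M_{\tilde n}(\lambda)^i\le \tau\, M_{\tilde n}(\lambda)^\tau$, and the sub-Gaussian hypothesis $M_{\tilde n}(\lambda)\le e^{\lambda^2\sigma^2/2}$ then yields $\mathbb{E}_0[e^{\lambda Y(\tau)}]\le \tau\, e^{\tau\lambda^2\sigma^2/2}$. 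A two-sided Chernoff bound gives, for every $\lambda>0$, $\mathbb{P}_0(|Y(\tau)|\ge d_\tau)\le 2\tau\, e^{\tau\lambda^2\sigma^2/2-\lambda d_\tau}$; optimizing at $\lambda=d_\tau/(\tau\sigma^2)$ with $d_\tau=D\tau^{1/2+\beta'}$ produces $\mathbb{P}_0(|Y(\tau)|\ge d_\tau)\le 2\tau\exp\big(-\tfrac{D^2}{2\sigma^2}\tau^{2\beta'}\big)$, and absorbing the polynomial factor $\tau$ into the exponent (valid for $\tau\ge h(t)$ with $t$ large) gives $2\exp(-c'\tau^{2\beta'})$ for any constant $c'$ slightly below $D^2/(2\sigma^2)$. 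Combining with the union bound above completes the proof.

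I do not expect a genuine obstacle here; it is essentially bookkeeping once the MGF recursion of Theorem~\ref{theorem_bounded} is available. The only points requiring a little care are the Chernoff optimization over $\lambda$ and ensuring the polynomial-to-exponential absorption is uniform over $\tau\in[h(t),t-1]$. A slightly slicker route is to invoke directly that $|Y(\tau)|$ is stochastically dominated by $\big|\sum_{s=1}^{\tau}\tilde n(s)\big|$, so that a sum of $\tau$ i.i.d.\ $\mathcal{SG}(\sigma^2)$ variables (which is $\mathcal{SG}(\tau\sigma^2)$) controls the tail with no polynomial prefactor, giving $c'=D^2/(2\sigma^2)$ exactly; but making that domination rigorous needs attention, since the block of noise terms comprising $Y(\tau)$ has a random length that is correlated with the resets through $G(|Y(\cdot)|)$ and only symmetry of $\tilde n$ is assumed, so I would fall back on the self-contained MGF argument.
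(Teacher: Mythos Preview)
Your proposal is correct and follows the same overall scaffold as the paper: De~Morgan plus a union bound over $\tau\in[h(t),t-1]$, a sub-Gaussian tail estimate for each $\mathbb{P}_0(|Y(\tau)|\ge d_\tau)$, and then the monotone replacement $\tau^{2\beta'}\ge h(t)^{2\beta'}$ before summing $t-h(t)$ identical terms.

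The only genuine difference is in how the per-index tail is obtained. The paper takes exactly the ``slicker route'' you sketch and then set aside: it invokes $|Y(\tau)|\le_{st}|Y'(\tau)|$ with $Y'(\tau)=\sum_{s<\tau}\tilde n(s)\in\mathcal{SG}(\sigma^2\tau)$, which immediately gives $\mathbb{P}_0(|Y(\tau)|>D\tau^{1/2+\beta'})\le 2\exp(-c'\tau^{2\beta'})$ with $c'=D^2/(2\sigma^2)$ and no polynomial prefactor. Your alternative---unrolling the MGF recursion $\mathbb{E}_0[e^{\lambda Y(\tau+1)}]\le M_{\tilde n}(\lambda)(\mathbb{E}_0[e^{\lambda Y(\tau)}]+1)$---is a perfectly valid and more self-contained substitute; it costs you an extra factor of order $\tau$ (your unrolling actually gives $(\tau+1)M^\tau$ rather than $\tau M^\tau$, but this is harmless), which you then correctly absorb into the exponent at the price of any $c'<D^2/(2\sigma^2)$. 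Your caution about justifying the stochastic domination is reasonable---the paper simply asserts it---so your fallback argument is arguably cleaner even if slightly less sharp in the constant.
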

By law of total probability,
\begin{equation*}
    \begin{split}
    \mathbb{P}_0(|Y(t)| \geq k)&=\mathbb{P}_0(|Y(t)| \geq k|A_{t-1})\mathbb{P}_0(A_{t-1})+\\&\quad\; \mathbb{P}_0(|Y(t)| \geq k|A_{t-1}^\mathrm{C})\mathbb{P}_0(A_{t-1}^\mathrm{C})\\&\leq \mathbb{P}_0(|Y(t)| \geq k|A_{t-1})+\mathbb{P}_0(A_{t-1}^\mathrm{C}).
\end{split}
\end{equation*}

From lemma \ref{conditional_P} and lemma \ref{complement_sg},
\begin{align}
    \label{required_bound}
     \mathbb{P}_0(|Y(t)| \geq k)&\leq 2(t-h(t))\exp{(-c'h(t)^{ 2\beta'})}\nonumber\\ &+2\Big(\frac{\gamma(\lambda)}{\alpha}+\exp{\Big(\frac{\lambda^2\sigma^2 h(t)}{2}\Big)}\Big) \exp{(-\lambda k)}.
\end{align}
\end{proof}

\subsection{Proof of Lemma \ref{conditional_P}}\label{section:conditional_P}
\begin{proof}[\unskip\nopunct]
The below mentioned claims \ref{restriction} and \ref{ordering_sg} are useful to prove lemma \ref{conditional_P}.

\begin{claim}\label{restriction}
Recall the event $A_t$. Then, for $\lambda>0$,
\begin{align*}
    \mathbb{E}_0[\exp{(\lambda Y(t))}|A_t] \leq\mathbb{E}_0[\exp{(\lambda Y(t))}|A_{t-1}]. 
\end{align*}
\end{claim}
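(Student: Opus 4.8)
The plan is to deduce the claim from two elementary facts: a sign-flip symmetry of the opinion-difference process, and a monotonicity property of the conditional mean of a random variable under conditioning on a shrinking symmetric event. Throughout, fix $\lambda>0$.

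\emph{Step 1: symmetry of the path.} I would first show that, under $\mathbb{P}_0$, the path $(Y(0),\dots,Y(t))$ has the same law as $(-Y(0),\dots,-Y(t))$. This is immediate by induction using the Markov structure of $Y(\cdot)$: given $Y(\tau)=y$, the next value $Y(\tau+1)$ equals $\tilde n(\tau)$ with probability $G(|y|)$ and $y+\tilde n(\tau)$ with probability $1-G(|y|)$; since $G(|\cdot|)$ is a function of $|y|$ alone and $\tilde n(\tau)$ has a symmetric distribution about $0$, the conditional law of $-Y(\tau+1)$ given $Y(\tau)=y$ is the same as that of $Y(\tau+1)$ given $Y(\tau)=-y$. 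Starting from the symmetric point $Y(0)=0$, negation-invariance of the whole path follows.

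\emph{Step 2: symmetry survives conditioning.} Both $A_{t-1}$ and $A_t$ are defined only through the magnitudes $|Y(\tau)|$, hence are invariant under path negation, and $A_t=A_{t-1}\cap\{|Y(t)|\le d_t\}$ (this is literal whenever $h(t)=h(t-1)$; for the remaining, sparsely occurring $t$ one replaces $A_{t-1}$ by the evident negation-invariant event measurable with respect to $Y(0),\dots,Y(t-1)$, and nothing else changes). Combining with Step 1, under each of the conditional laws $\mathbb{P}_0(\cdot\mid A_{t-1})$ and $\mathbb{P}_0(\cdot\mid A_t)$ the variable $Y(t)$ is symmetric about $0$, so
\[
\mathbb{E}_0[e^{\lambda Y(t)}\mid A_{t-1}]=\mathbb{E}_0[\cosh(\lambda Y(t))\mid A_{t-1}]\quad\text{and}\quad\mathbb{E}_0[e^{\lambda Y(t)}\mid A_{t}]=\mathbb{E}_0[\cosh(\lambda Y(t))\mid A_{t}].
\]

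\emph{Step 3: monotonicity under a shrinking symmetric event.} Here I work under the base measure $\mathbb{P}_0(\cdot\mid A_{t-1})$ and set $W=\cosh(\lambda Y(t))\ge 1$; its expectation under this measure is finite for the admissible range of $\lambda$ because the noise is sub-Gaussian. As $z\mapsto\cosh(\lambda z)$ is even and strictly increasing in $|z|$, the event $\{|Y(t)|\le d_t\}$ coincides with $\{W\le w_0\}$ for $w_0=\cosh(\lambda d_t)$. For any integrable real random variable $W$ and any level $w_0$ with $\mathbb{P}(W\le w_0)>0$ one has $\mathbb{E}[W\mid W\le w_0]\le\mathbb{E}[W]$, because $\mathbb{E}[W\mid W\le w_0]\le w_0\le\mathbb{E}[W\mid W>w_0]$ and $\mathbb{E}[W]$ is a convex combination of these two conditional means. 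Applying this under $\mathbb{P}_0(\cdot\mid A_{t-1})$ gives $\mathbb{E}_0[\cosh(\lambda Y(t))\mid A_t]\le\mathbb{E}_0[\cosh(\lambda Y(t))\mid A_{t-1}]$, which together with Step 2 is exactly the claim.

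\emph{Main obstacle.} None of the three steps is technically deep; the part that needs care is the symmetry bookkeeping in Steps 1--2 — establishing negation-invariance of the \emph{joint} path law (so that the conditional marginal of $Y(t)$ is genuinely symmetric on $A_{t-1}$ and on $A_t$) and verifying that $A_t$ differs from $A_{t-1}$ only by the symmetric constraint $\{|Y(t)|\le d_t\}$. Once that is pinned down, Step 3 is a one-line convexity argument.
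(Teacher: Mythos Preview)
Your proposal is correct and follows essentially the same route as the paper: both use the sign-flip symmetry of the path to reduce $\mathbb{E}_0[e^{\lambda Y(t)}\mid A_\bullet]$ to $\mathbb{E}_0[\cosh(\lambda Y(t))\mid A_\bullet]$, and then argue that the extra symmetric constraint $\{|Y(t)|\le d_t\}$ distinguishing $A_t$ from $A_{t-1}$ can only lower this expectation. Your version is in fact more careful---you justify the joint-path symmetry explicitly, flag the $h(t)$ versus $h(t-1)$ bookkeeping, and phrase the last step cleanly as $\mathbb{E}[W\mid W\le w_0]\le\mathbb{E}[W]$ with $W=\cosh(\lambda Y(t))$, whereas the paper simply asserts that $f_{Y(t)\mid A_t}$ is a ``restriction'' of $f_{Y(t)\mid A_{t-1}}$ and that $\cosh$ is increasing.
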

\begin{proof}[Proof of Claim \ref{restriction}]
Let the conditional probability of $Y(t)$ given an arbitrary event $A$ be denoted as $f_{{Y(t)}|{A}}(.)$. Note that the constraints $\{\lvert Y(\tau) \rvert \leq d_\tau\}$ are symmetric for all $\tau$. With $Y(0)=0$, symmetric noise model and symmetric constraints, we observe that $f_{{Y(t)}|{A_{t}}}(.)$ and $f_{{Y(t)}|{A_{t-1}}}(.)$ are symmetric about zero. Hence,
\begin{align}
    \mathbb{E}_0[\exp{(\lambda Y(t))}|A_{t-1}]&= \int_0^\infty \exp{(\lambda y})f_{{Y(t)}|{A_{t-1}}}(y)dy\nonumber\\&\quad+\int_0^\infty \exp{(-\lambda y})f_{{Y(t)}|{A_{t-1}}}(y)dy\nonumber\\
    &=2 \int_0^\infty \cosh{(\lambda y})f_{{Y(t)}|{A_{t-1}}}(y)dy. \label{At-1}
\end{align}
Similarly, 
\begin{align}
    \mathbb{E}_0[\exp{(\lambda Y(t))}|A_{t}]=2 \int_0^\infty \cosh{(\lambda y})f_{{Y(t)}|{A_{t}}}(y)dy. \label{At}
\end{align}

Note that $\cosh{(\lambda y)}$ increases with $y$ for any $\lambda>0$ and  $f_{{Y(t)}|{A_{t}}}(.)$ is a restriction of $f_{{Y(t)}|{A_{t-1}}}(.)$. Therefore, from (\ref{At-1}) and (\ref{At}), we have the result.
\end{proof}

\begin{claim}\label{ordering_sg}
Let $\{Y'(t),t\ge 0\}$ be the process of opinion difference for a stable two-agent system with $G=0$. That is,
\begin{align*}
    Y'(t+1)=Y'(t)+\tilde{n}(t).
\end{align*}
Then, for any $t \ge 0$,
\begin{align*}
    \mathbb{E}_0[\exp{(\lambda Y(t))]} \leq \mathbb{E}_0[\exp{(\lambda Y'(t))]}\leq \exp{\Big(\frac{\lambda^2\sigma^2 t}{2}\Big)}.
\end{align*}
\end{claim}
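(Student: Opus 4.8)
The plan is to establish the two inequalities in the chain separately. The rightmost one, $\mathbb{E}_0[\exp(\lambda Y'(t))]\le\exp(\lambda^2\sigma^2 t/2)$, is standard and I would dispatch it first: since $Y'(0)=0$ and $Y'(\tau+1)=Y'(\tau)+\tilde n(\tau)$, we have $Y'(t)=\sum_{\tau=0}^{t-1}\tilde n(\tau)$, a sum of $t$ independent copies of $\tilde n\in\mathcal{SG}(\sigma^2)$, so $\mathbb{E}_0[\exp(\lambda Y'(t))]=\prod_{\tau=0}^{t-1}\mathbb{E}[\exp(\lambda\tilde n(\tau))]\le\exp(\lambda^2\sigma^2 t/2)$ directly from the definition of a sub-Gaussian variable, and this holds for every $\lambda\in\mathbb{R}$.

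For the leftmost inequality, $\mathbb{E}_0[\exp(\lambda Y(t))]\le\mathbb{E}_0[\exp(\lambda Y'(t))]$, I would \emph{not} attempt a pathwise coupling but instead run a one-step recursion on the conditional m.g.f. Conditioning on $Y(t)$ and using $\tilde n(t)\perp Y(t)$, the update rule gives $\mathbb{E}[\exp(\lambda Y(t+1))\mid Y(t)]=M_{\tilde n}(\lambda)\big(G(|Y(t)|)+(1-G(|Y(t)|))\exp(\lambda Y(t))\big)$ with $M_{\tilde n}(\lambda)=\mathbb{E}[\exp(\lambda\tilde n(t))]$. Rewriting the bracket as $\exp(\lambda Y(t))-G(|Y(t)|)(\exp(\lambda Y(t))-1)$ and taking expectations, one gets $\mathbb{E}_0[\exp(\lambda Y(t+1))]\le M_{\tilde n}(\lambda)\,\mathbb{E}_0[\exp(\lambda Y(t))]$ as soon as $\mathbb{E}_0[G(|Y(t)|)(\exp(\lambda Y(t))-1)]\ge 0$. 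This last fact is where I would invoke symmetry: since $Y(0)=0$ and $\tilde n$ is symmetric, $Y(t)$ has a law symmetric about $0$ (the same observation already used for Claim~\ref{restriction}), and $G(|\cdot|)$ is even, so pairing the contributions of $y$ and $-y$ turns the integrand into $2G(|y|)(\cosh(\lambda y)-1)\ge 0$. Iterating the recursion from $Y(0)=0$ then yields $\mathbb{E}_0[\exp(\lambda Y(t))]\le M_{\tilde n}(\lambda)^t$, and since $M_{\tilde n}(\lambda)^t=\mathbb{E}_0[\exp(\lambda Y'(t))]$ this is exactly the claimed inequality (and together with $M_{\tilde n}(\lambda)\le\exp(\lambda^2\sigma^2/2)$ it independently re-derives the whole chain).

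The main obstacle is recognizing that the naive comparison does not go through and that symmetry is the right substitute. There is no evident coupling that forces $|Y(t)|\le|Y'(t)|$ almost surely: on a reset step $Y(t+1)=\tilde n(t)$, whose magnitude can exceed $|Y'(t)+\tilde n(t)|$ exactly when $Y'(t)$ is near $0$; equivalently the pointwise inequality $G(|y|)+(1-G(|y|))\exp(\lambda y)\le\exp(\lambda y)$ is true only for $y\ge 0$. Abandoning pointwise control and using the symmetry of the marginal law of $Y(t)$ so that the reset term cancels in expectation, via the $\cosh$ identity above, is the one genuinely delicate point; the remaining steps are routine bookkeeping.
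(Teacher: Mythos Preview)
Your argument is correct, and it follows a genuinely different route from the paper's. The paper disposes of the first inequality in one line by asserting a stochastic ordering, ``Clearly, $Y(t)\le_{st}Y'(t)$,'' and then invoking monotonicity of $y\mapsto e^{\lambda y}$ for $\lambda>0$. You instead avoid any distributional comparison and control the m.g.f.\ directly via the one-step recursion, using the symmetry of the law of $Y(t)$ to obtain $\mathbb{E}_0[G(|Y(t)|)(e^{\lambda Y(t)}-1)]=\mathbb{E}_0[G(|Y(t)|)(\cosh(\lambda Y(t))-1)]\ge 0$ and hence $\mathbb{E}_0[e^{\lambda Y(t+1)}]\le M_{\tilde n}(\lambda)\,\mathbb{E}_0[e^{\lambda Y(t)}]$. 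Your caution about the coupling is warranted: since both $Y(t)$ and $Y'(t)$ are symmetric about $0$, the ordering $Y(t)\le_{st}Y'(t)$ would actually force equality in distribution, so the paper's assertion cannot hold as written; and even the natural repair $|Y(t)|\le_{st}|Y'(t)|$ fails in general (take $\tilde n$ uniform on $\{-1,1\}$ and any $G(1)>0$: then $\mathbb{P}_0(|Y(2)|>0)=(1+G(1))/2>1/2=\mathbb{P}_0(|Y'(2)|>0)$). So your symmetry/$\cosh$ device is not merely an alternative but the more solid argument here, patching precisely the difficulty you flagged; the paper's one-line shortcut glosses over it.
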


The proof of claim \ref{ordering_sg} uses the notion of stochastic ordering and so, we overview it below. For a random variable $X$, let $F_X$ and $\Bar{F}_X$ represent its distribution function and tail distribution respectively, i.e. for any $x\in \mathbb{R}$,
\begin{align*}
    F_X(x)=\mathbb{P}(X \le x),\\
    \bar{F}_X(x)=\mathbb{P}(X > x).
\end{align*}

\begin{definition}[{Stochastic Ordering \cite[Sec. 1.2]{Stoyan1983}}]\label{s_ordering}
Given two random variables $X$ and $Y$ taking values in $\mathbb{R}$, we denote $X \le_{st} Y$ if 
\begin{align*}
    F_X(l) \ge F_Y(l) \; \forall l \in \mathbb{R}
\end{align*}
or equivalently, if
\begin{align*}
    \bar{F}_X(l) \le \bar{F}_Y(l) \; \forall l \in \mathbb{R}.    
\end{align*}
Also, if $X \le_{st} Y$, then $\mathbb{E}[f(X)]\le\mathbb{E}[f(Y)]$ for all non decreasing functions $f$ for which the expectations exist. 
\end{definition}

\begin{proof}[Proof of Claim \ref{ordering_sg}]
Clearly, $Y(t) \leq_{st} Y'(t)$. As exponential function is non decreasing for $\lambda>0$, by definition \ref{s_ordering},  $\mathbb{E}_0[\exp{(\lambda Y(t)) }]\leq \mathbb{E}_0[\exp{(\lambda Y'(t))}]$. Since $Y(0)=0$, 
\begin{align*}
    Y'(t)=\sum_{\tau=0}^{t-1} \tilde{n}(\tau).
\end{align*}
We also assumed $\tilde{n}(t) \in \mathcal{SG}(\sigma^2)$ for all $t$. Therefore, $Y'(t) \in \mathcal{SG}(\sigma^2 t)$. That is,
\begin{align*}
    \mathbb{E}_0[\exp{\lambda Y'(t)}]=\mathbb{E}_0[\exp\Big({\lambda  \sum_{\tau=0}^{t-1} \tilde{n}(\tau)}\Big)]\leq \exp{\Big(\frac{\lambda^2 \sigma^2 t}{2}\Big)}.
\end{align*}
\end{proof}

Along with these claims,  we use the proof technique of theorem \ref{theorem_bounded} to bound the condition probability $\mathbb{P}_0(\lvert Y(t) \rvert \geq k|A_{t-1})$. By the law of iterated expectations,
\begin{align}
   \mathbb{E}_0[e^{\lambda  Y(t+1) }]& = \mathbb{E}_0[\mathbb{E}[e^{\lambda Y(t+1) } | Y(t)]]\nonumber\\
   & =M_{\tilde{n}}(\lambda)\mathbb{E}_0[e^{\lambda Y(t)}(1-G(|Y(t)|))+G(|Y(t)|)]. \label{unconditional_sg}
\end{align}
Now, from (\ref{unconditional_sg}),
\begin{align*}
    \mathbb{E}_0[e^{\lambda  Y(t+1)}|A_t]& =M_{\tilde{n}}(\lambda)\mathbb{E}_0[e^{\lambda Y(t)}(1-G(|Y(t)|))\\ &\quad\quad\quad\quad+G(|Y(t)|)|A_t]\\
    &\leq M_{\tilde{n}}(\lambda)\mathbb{E}_0[e^{\lambda Y(t)}(1-G(|Y(t)|))+1|A_t]\\
    & \labelrel\leq{g_dec} M_{\tilde{n}}(\lambda)(\mathbb{E}_0[e^{\lambda Y(t)}|A_t](1-G(d_t))+1)\\
    & \labelrel\leq{ineq_lemma} M_{\tilde{n}}(\lambda)(\mathbb{E}_0[e^{\lambda Y(t)}|A_{t-1}](1-G(d_t))+1), 
\end{align*}
where inequality \eqref{g_dec} holds as $G(.)$ is decreasing in its argument and inequality \eqref{ineq_lemma} follows claim \ref{restriction}.
So, we have a recursive inequality:
\begin{align*}
\mathbb{E}_0[e^{\lambda  
Y(t+1)}|A_t]\leq M_{\tilde{n}}(\lambda)(1+\mathbb{E}_0[e^{\lambda Y(t)}|A_{t-1}](1-G(d_t))),
\end{align*}
which upon expansion gives
\begin{align}
    \mathbb{E}_0[e^{\lambda Y(t)}|A_{t-1}]\nonumber&\leq M_{\tilde{n}}(\lambda)\\&+\sum \limits_{i=h(t)}^{t-2} M_{\tilde{n}}(\lambda)^{t-i} \prod_{j=0}^{t-2-i}(1-G(d_{t-1-j}))\nonumber\\&+\mathbb{E}_0[\exp{(\lambda Y(h(t))}] \prod_{i=h(t)}^{t-1}M_{\tilde{n}} (\lambda)(1-G(d_i)).\label{rec_sg_main} 
\end{align}
Simplifying the second term in (\ref{rec_sg_main}),
\begin{align*}
    \sum \limits_{i=h(t)}^{t-2} M_{\tilde{n}}(\lambda)^{t-i} &\prod_{j=0}^{t-2-i}(1-G(d_{t-1-j}))\\&\leq
    \sum \limits_{i=h(t)}^{t-2} M_{\tilde{n}}(\lambda)^{t-i} \prod_{j=0}^{t-2-i}(1-G(d_{t}))\\
    &= \sum \limits_{i=h(t)}^{t-2} M_{\tilde{n}}(\lambda)^{t-i}(1-G(d_t))^{t-i-1}\\
    &= M_{\tilde{n}}(\lambda)\sum \limits_{i=h(t)}^{t-2} \Big[M_{\tilde{n}}(\lambda)(1-G(d_t))\Big]^{t-i-1}\\
    &=M_{\tilde{n}}(\lambda)\sum \limits_{i=1}^{t-1-h(t)} \Big[ M_{\tilde{n}}(\lambda)(1-G(d_t))\Big]^{i}.
\end{align*}
Simplifying the product in the third term of (\ref{rec_sg_main}),
\begin{align*}
    \prod_{i=h(t)}^{t-1}M_{\tilde{n}} (\lambda)(1-G(d_i))&\leq \prod_{i=h(t)}^{t-1}M_{\tilde{n}} (\lambda)(1-G(d_t))\\
    &=\Big[M_{\tilde{n}} (\lambda)(1-G(d_t))\Big]^{t-h(t)}.
\end{align*}

Using claim \ref{ordering_sg}, the third term in (\ref{rec_sg_main}) is now upper bounded by $\exp{\Big(\frac{\lambda^2\sigma^2 h(t)}{2}\Big)} \Big[M_{\tilde{n}} (\lambda)(1-G(d_t))\Big]^{t-h(t)}$.
Putting all these simplified terms together in (\ref{rec_sg_main}), we get
\begin{align}
   \mathbb{E}_0[e^{\lambda Y(t)}|A_{t-1}]& \leq M_{\tilde{n}}(\lambda)\sum \limits_{i=0}^{t-1-h(t)} \Big[ M_{\tilde{n}}(\lambda)(1-G(d_t))\Big]^{i}\nonumber\\&+\exp{\Big(\frac{\lambda^2\sigma^2 h(t)}{2}\Big)} \Big[M_{\tilde{n}} (\lambda)(1-G(d_t))\Big]^{t-h(t)}.\label{c_rec_sg}
\end{align}
Since $\tilde{n}(t) \in \mathcal{SG}(\sigma^2)$,
$M_{\tilde{n}}(\lambda) \leq \exp{\Big(\frac{\lambda^2\sigma^2}{2}\Big)}$. We assume that $\lambda  \xrightarrow{} 0$ as $t\xrightarrow{} \infty$. That is, for large $t$, $\exp{(\frac{\lambda^2\sigma^2}{2})} = 1 + \frac{\lambda^2\sigma^2}{2} + o(\lambda^4)$ where $\lim \limits_{t\xrightarrow{}\infty} \frac{o(\lambda^4)}{\lambda}=0$. Let $\gamma(\lambda)=1 + \frac{\lambda^2\sigma^2}{2} + o(\lambda^4)$. Now, (\ref{c_rec_sg}) gives
\begin{align}
    \mathbb{E}_0[&e^{\lambda Y(t)}|A_{t-1}]\leq \gamma(\lambda)\sum \limits_{i=0}^{t-1-h(t)} \Big[ \gamma(\lambda)(1-G(d_t))\Big]^{i}\nonumber\\&+\exp{\Big(\frac{\lambda^2\sigma^2 h(t)}{2}\Big)} \Big[\gamma (\lambda)(1-G(d_t))\Big]^{t-h(t)}.\label{rec_e}
\end{align}
Note that $Y(t)$ is symmetric about zero given $A_{t-1}$. By using Chernoff bound, 
\begin{align}
    \mathbb{P}_0(\lvert Y(t) \rvert \geq k|A_{t-1}) &= 2\mathbb{P}_0(Y(t) \geq k|A_{t-1})\nonumber\\  &\leq 2\mathbb{E}_0[e^{\lambda  Y(t) }|A_{t-1}]\exp{(-\lambda k)}.\nonumber
\end{align}
Therefore,
\begin{align}
    \mathbb{P}_0(\lvert Y(t) \rvert \geq k|A_{t-1}) &\leq 2\Big(\gamma(\lambda)\sum \limits_{i=0}^{t-1-h(t)} \Big[ \gamma(\lambda)(1-G(d_t))\Big]^{i}\nonumber\\+\exp{\Big(\frac{\lambda^2\sigma^2 h(t)}{2}\Big)} &\Big[\gamma (\lambda)(1-G(d_t))\Big]^{t-h(t)}\Big) \exp{(-\lambda k)}. \label{c_bound_p_sg}
\end{align}
We choose an optimal $\lambda$ for a better bound in (\ref{c_bound_p_sg}) by setting $\gamma( \lambda) (1-G(d_t)) <1$. That is, for some $\alpha>0$, let $( 1+\frac{\lambda^2\sigma^2}{2})(1-G(d_t))=1-\alpha$
and with $G(|Y(t)|)=\frac{B}{1+|Y(t)|^{2-\delta}}$, $B, \delta>0$, we get $\lambda = \frac{\sqrt{2}}{\sigma}\sqrt{\frac{B-\alpha(1+d_t^{2-\delta})}{(1+d_t^{2-\delta}-B)}}$. For $\lambda$ to be real, $\alpha \leq \frac{B}{1+d_t^{2-\delta}}$. Choose $\alpha = \frac{B}{2(1+d_t^{2-\delta})}$ and we have $\lambda =\frac{1}{\sigma} \sqrt{\frac{B}{(1+d_t^{2-\delta}-B)}}$. Taking $d_t=Dt^{\frac{1}{2}+\beta'}$ for some $\beta'>0$ gives $\alpha=\mathcal{O}(t^{k'(\delta-2)})$ where $k'=\frac{1}{2}+\beta'$. From (\ref{rec_e}),
\begin{align*}
\mathbb{E}_0[e^{\lambda Y(t) }|A_{t-1}] &\leq \gamma(\lambda)\frac{1 - (1-\alpha)^{t-h(t)}}{\alpha}+\exp{\Big(\frac{\lambda^2\sigma^2 h(t)}{2}\Big)}\\ &\leq \frac{\gamma(\lambda)}{\alpha}+\exp{\Big(\frac{\lambda^2\sigma^2 h(t)}{2}\Big)},
\end{align*}
which gives the required tail probability,
\begin{align}\label{final_cond_sg}
    \mathbb{P}_0(\lvert Y(t) \rvert \geq k|A_{t-1}) \leq 2\Big(\frac{\gamma(\lambda)}{\alpha}+\exp{\Big(\frac{\lambda^2\sigma^2 h(t)}{2}\Big)}\Big) \exp{(-\lambda k)}.\nonumber 
\end{align}
From the final inequality,  we observe that, by appropriately choosing $\beta'<\frac{\frac{\delta}{4}-\beta}{1-\frac{\delta}{2}}$, there is a parameter regime, $\beta <  \frac{\delta}{4}$ and $\zeta<1-\frac{\delta}{2}$, such that the tail probability of the opinion difference decays to zero exponentially.
\end{proof}

\subsection{Proof of Lemma \ref{complement_sg}}\label{section:complement_sg}
\begin{proof}[\unskip\nopunct]
We have $A_t=\bigcap\limits_{\tau=h(t)}^t \{\lvert Y(\tau) \rvert \leq d_\tau\}$ where $h(t)=t^\zeta, 0<\zeta<1$. Here, we assume that $d_\tau=D~\tau^{\frac{1}{2}+\beta'}$ for some $\beta'>0$ and $D> 0$.
\begin{align*}
    \mathbb{P}_0(A_{t-1}^\mathrm{C})&=\mathbb{P}_0\Big[\Big(\bigcap\limits_{\tau=h(t)}^{t-1} \{\lvert Y(\tau) \rvert \leq d_\tau\}\Big)^\mathrm{C}\Big]\\
    &\labelrel={demorgan}\mathbb{P}_0\Big[\bigcup\limits_{\tau=h(t)}^{t-1} \{\lvert Y(\tau) \rvert > d_\tau\}\Big]\\
    &\labelrel\leq{unionbound} \sum_{\tau=h(t)}^{t-1}\mathbb{P}_0(\lvert Y(\tau) \rvert > d_\tau)\\
    &\labelrel\leq{ordering}\sum_{\tau=h(t)}^{t-1}\mathbb{P}_0(\lvert Y'(\tau) \rvert > D~\tau^{\frac{1}{2}+\beta'})\\
    &\labelrel\leq{subGofY}\sum_{\tau=h(t)}^{t-1} 2\exp{(-c'\tau^{2\beta'})}\; \text{for some $c'>0$}
\end{align*}
\begin{align*}
    &\leq\sum_{\tau=h(t)}^{t-1} 2\exp{(-c'h(t)^{2\beta'})}.
\end{align*}
Therefore, we have
\begin{align*}
    \mathbb{P}_0(A_{t-1}^\mathrm{C})\leq 2(t-h(t)) \exp{(-c'h(t)^{2\beta'})}.
\end{align*}
\eqref{demorgan} and \eqref{unionbound} follow De Morgan's law and Boole's inequality (union bound) respectively. Recalling the characteristics of $\{Y'(t), t\ge 0\}$ as discussed in the proof of claim \ref{ordering_sg}, we have $\lvert Y(t) \rvert \le_{st} \lvert Y'(t) \rvert$ and $Y'(t) \in \mathcal{SG}(\sigma^2 t)$. Hence, the inequalities \eqref{ordering} and \eqref{subGofY}.
\end{proof}

\end{document}